\algnewcommand\algorithmicforeach{\textbf{for each}}
\newdimen{\algindent}
\algnewcommand\LeftComment[2]{%
\hspace{#1\algindent}$\triangleright$ \eqparbox{COMMENT}{#2} \hfill %
}
\algnewcommand\LeftCommentNoTriangle[2]{%
\hspace{#1\algindent} \eqparbox{COMMENT}{#2} \hfill %
}
\newcommand{\gd}[2]{\mathcal G ( #1, #2) }
\newtheorem{lemma}{Lemma}
\DeclarePairedDelimiterX\Basics[1](){ #1}
\newcommand{\labeltarget}[1]{\Hy@raisedlink{\hypertarget{#1}{}}}
\begin{document}
\title{Distributed Beamforming Techniques for Cell-Free Wireless Networks Using Deep Reinforcement Learning} 
\author{Firas Fredj, Yasser Al-Eryani, Setareh Maghsudi, Mohamed Akrout, \\and Ekram Hossain, {\it Fellow, IEEE} \thanks{F. Fredj, Y. Al-Eryani, M. Akrout, and E. Hossain are with the Department of Electrical and Computer Engineering at the University of Manitoba, Canada (emails: fredjf1@myumanitoba.ca, aleryany@myumanitoba.ca, akroutm@myumanitoba.ca, Ekram.Hossain@umanitoba.ca). S. Maghsudi is with the Department of Computer Science, University of T\"{u}bingen, Germany (email: setareh.maghsudi@uni-tuebingen.de). This work was supported in part by a Discovery Grant from the Natural Sciences and Engineering Research Council of Canada (NSERC) and in part by Grant 01IS20051 from the German Federal Ministry of Education and Research (BMBF).}
}
\maketitle
\begin{abstract}
In a cell-free network, a large number of mobile devices are served simultaneously by several base stations (BSs)/access points(APs) using the same time/frequency resources. However, this creates high signal processing demands (e.g. for beamforming) at the transmitters and receivers. In this work, we develop centralized and distributed deep reinforcement learning (DRL)-based methods to optimize beamforming at the uplink of a cell-free network. First, we propose a fully centralized uplink beamforming method (i.e. centralized learning) that uses the Deep Deterministic Policy Gradient algorithm (DDPG) for an  offline-trained DRL model. We then enhance this method, in terms of convergence and performance, by using distributed experiences collected from different APs based on the Distributed Distributional Deterministic Policy Gradients algorithm (D4PG) in which the APs represent the distributed agents of the DRL model. To reduce the complexity of signal processing at the central processing unit (CPU), we propose a fully distributed DRL-based uplink beamforming scheme. This scheme divides the beamforming computations among distributed APs. {The proposed schemes are then benchmarked against two common linear beamforming schemes, namely, minimum mean square estimation (MMSE) and the simplified conjugate symmetric schemes.}
The results show that the D4PG scheme with distributed experience achieves the best performance irrespective of the network size. Furthermore, although the proposed distributed beamforming technique reduces the complexity of centralized learning in the DDPG algorithm, it performs better than the DDPG algorithm only for small-scale networks. The performance superiority of the fully centralized DDPG model becomes more evident as the number of APs and/or UEs increases. 
The codes for all of our DRL implementations are available at \texttt{\url{https://github.com/RayRedd/Distributed\_beamforming\_rl}}
\end{abstract}  
\begin{IEEEkeywords}
Cell-free network, beamforming, successive interference cancellation, deep reinforcement learning (DRL), deep deterministic policy gradient algorithm (DDPG), distributed distributional deterministic policy gradients algorithm (D4PG). 
\end{IEEEkeywords}
\section{Introduction}
\label{sec:Intro}

\subsection{Background}

To provide ultra-reliable low-latency communications (URLLC)~\cite{Massive_MTC_2} in the beyond 5G wireless systems (B5G), the idea of cell-free networks has emerged~\cite{Cell_Less_1}. A fully centralized cell-free network will use a fully connected wireless network architecture with centralized processing, control, and storage of data. Precisely, in a cell-free wireless network ({Fig.~\ref{System_Model}}), all access points (APs)/base stations (BSs) cooperate to simultaneously serve all user equipments (UEs) within the network coverage area \cite{Cell_Less_3,Cell_Less_4,DOMA}.
Such centralized network operations mitigate the adverse effects of non-coordinated collisions and interference among transmitted signals, especially in scenarios such as massive machine-type communications (mMTC) \cite{Cell_Less_2}. Furthermore, in a cell-free architecture, fast fronthaul/backhaul links connect all APs to an edge cloud processor which is responsible for  simultaneous downlink (uplink) beamforming design for transmit(receive) signals to(from) different UEs \cite{Cell_Less_Beamforming_1,Cell_Less_Beamforming_2}. On the downside, a fully centralized cell-free network architecture requires a huge computational capacity. Moreover, without an efficient design, there would be excessive control signaling \cite{Cell_Free_Complexity_1,Cell_Free_Complexity_2}. Note that the main benefits of fully centralized cell-free networks include enhanced coverage, improved diversity, and provisioning of efficient interference cancellation mechanisms.    

\begin{figure}[htb]
\centering
\includegraphics[scale=0.5]{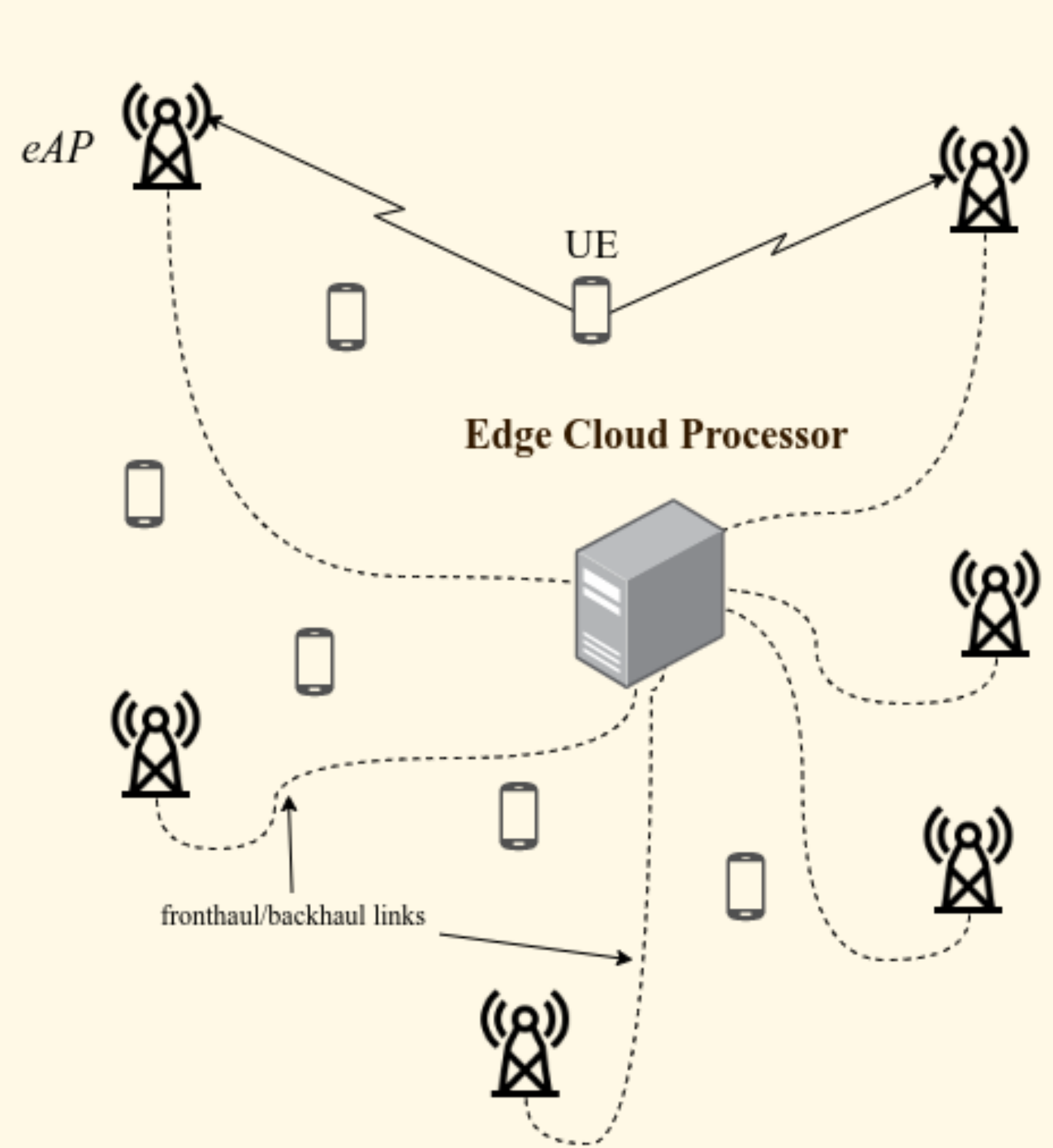}
\caption{A cell-free network model.}
\label{System_Model}
\end{figure}

The challenges associated with cell-free networking such as multi-UE joint beamforming and channel estimation \cite{DRL_MEC_1,DRL_MEC_2,DRL_MEC_3,DRL_MEC_4} can be addressed by using artificial intelligence (AI), specifically, machine learning (ML) techniques. Due to the computational complexity, these problems are often characterized by an {\it algorithmic deficit} rather than a {\it modeling deficit} \cite{simeone18}. 
Among numerous methods, deep reinforcement learning (DRL) is a notable candidate to design cell-free networks that avoids having a training data sets {\it a priori} which are  hard to obtain for dynamic wireless environments. Also, DRL enables us to achieve a trade-off between centralized- and distributed processing of computational tasks. Recent results have shown that, for a cell-free network,  simultaneous uplink/downlink beamforming within a centralized processing unit (CPU)  in optimal system performance. On the downside, fully centralized processing suffers from high computational complexity and excessive processing delay, especially when many UEs utilize the same time-frequency resources. However, as the signal processing tasks of an uplink/downlink cell-free network becomes more distributed, the network performance becomes closer to that of a cellular network with non-cooperative APs. 

%
The state-of-the-art of cell-free networks focuses on uplink/downlink beamforming \cite{Cell_Less_3, Cell_Less_Beamforming_2}, estimation of channel state information (CSI) \cite{DRL_MEC_1,Cell_Free_CSI_Est_2}, fronthaul imperfections \cite{Fronthaul_1}, and scalable cell-free network designs \cite{GCoMP, Cell_Free_Complexity_1, Cell_Free_Yasser}. 
For example, in \cite{Beamforming_Algorithms_1}, the authors propose conjugate beamforming and zero-forcing precoding scheme for a fully centralized downlink cell-free network. They show that the zero-forcing technique outperforms the conjugate beamforming technique. In \cite{DRL_MEC_1}, the authors develop a channel estimation technique for mmWave-enabled massive cell-free network using supervised learning-based denoising convolutional neural network. To reduce the complexity of centralized signal processing, \cite{GCoMP} proposes a partitioned cell-free wireless network architecture. The architecture clusters the cooperating APs based on current network CSI (UE-centric clustering). The scheme enables an efficient design of practical mMTC systems by compensating the effect of inter-cluster interference. The compensation is followed by network partitioning and enabling multi-level successive interference cancellation (SIC) at each receiver \cite{DOMA}. Another low-complexity design of cell-free network architecture is presented in \cite{Cell_Free_Yasser}. The core idea is to reduce the dimensionality of beamforming matrices by using a dynamic clustering of APs. Each cluster then represents a single multi-antenna AP (transmit/receive diversity). In \cite{DRL_MEC_3}, the authors utilize supervised learning to solve the beamforming problem in cell-free networks. They include a complete neural network optimizer in each AP. Every AP then obtains the local CSI knowledge by estimating only the large-scale fading while considering the small-scale fading as a constant. {Table \ref{SoA}} summarizes some of the important works in the area of beamforming in cell-free networks. Here, the system model column refers to the operational confguration  of the APs in a cell-free network (i.e. how the APs are grouped) as described below:
\begin{itemize}
    \item \textbf{Centralized:} All of the APs jointly serve the users and the processing is done centrally in the CPU, which is referred to as the edge cloud processor (ECP).
    
    \item \textbf{Co-located:} Each AP operates independently using its own processing capabilities.
    
    \item \textbf{UE-centric clustering:} Each user is assigned to cluster/group of APs.
    
    \item \textbf{Local partial zero forcing:} The channel estimation is performed locally at the APs.
    
    \item \textbf{Static:} The APs are considered in fixed locations in the network.
    
    \item \textbf{Dynamic:} The APs are grouped into clusters and the clusters are updated at each time
    step.
\end{itemize}

\begin{table*}\scriptsize
\centering
\caption{Summary of beamforming schemes in cell-free networks}
\begin{tabular}{|c|c|c|c|}
\hline 
\textbf{Ref.} & \textbf{System model} & \textbf{Main objective} & \textbf{Techniques and characteristics} \\
\hline
\hline
\shortstack{\cite{Optimal_Cell_Free}\\ \textcolor{white}{.}} &  \shortstack{Centralized\\ \textcolor{white}{.}}  & \shortstack{Max-min fairness\\ \textcolor{white}{.}} & \shortstack{Formulates downlink beamforming problem as a quasi-concave\\ optimization and uses bisection method}  \\
\hline
 \shortstack{\cite{Angel_of_Arrival_Beamforming}\\ \textcolor{white}{.}} &  \shortstack{Centralized\\ \textcolor{white}{.}}  & \shortstack{Max-min fairness\\ \textcolor{white}{.}} & \shortstack{Designs an angle-of-arrival-based beamforming/combining\\  scheme for FDD-based cell-free network}  \\   
\hline
\shortstack{\cite{Cell_Less_3, Cell_Less_Beamforming_2,Conjugate_BF,ZF_VS_CB}\\ \textcolor{white}{.}\\ \textcolor{white}{.}}
 &
 \shortstack{ Centralized\\ \\ \textcolor{white}{.}}&
  \shortstack{ Max-min fairness\\ \\ \textcolor{white}{.}}
 &
\shortstack{Use conjugate beamforming and zero forcing techniques. \\ MMSE processing is also used for (partially-) \\ distributed operations of cell-free massive MIMO in \cite{Cell_Less_3}}\\
\hline
\shortstack{\cite{Schedualing_Cell_Free}  \textcolor{white}{ .}\\ \textcolor{white}{.}\\ \textcolor{white}{.}} & \shortstack{Co-located,\\ Cell-free \\ \textcolor{white}{.} } & \shortstack{Max-min fairness \\ \textcolor{white}{.} \\ \textcolor{white}{.}} & \shortstack{Uses Lyapunov optimization techniques to develop a dynamic \\ scheduling algorithm to perform user equipment (UE) scheduling based on \\ time slot and transmission rate} \\
\hline
\shortstack{\cite{GCoMP,Hybrid_Analog_Digital} \\ \textcolor{white}{.}} & \shortstack{UE-centric \\ clustering}  & \shortstack{ Maximize \\  per-cluster sum-rate} & \shortstack{First paper solves beamforming problem with optimal CSI, \\ Second paper proposes hybrid analog-digital beamforming} \\
\hline
\shortstack{\cite{Local_Precoding_Cell_Free}\\ \textcolor{white}{.}}  & \shortstack{{Local partial} \\ {zero-forcing}} & \shortstack{{Max-min fairness} \\ \textcolor{white}{ }} &
{\shortstack{Uses local CSI of the strongest connected UEs at each AP \\ to conduct local zero-forcing beamforming}} \\
\hline
\shortstack{\cite{ZF_RZF}\\ \textcolor{white}{.}}  & \shortstack{Static \\ \textcolor{white}{.}} & \shortstack{Maximize sum-rate \\ (uplink/downlink)} & \shortstack{Uses minimum-mean-square-error receiver in uplink, \\ uses zero-forcing and regularized zero-forcing beamforming in downlink}\\
\hline
\shortstack{\cite{Cell_Free_Yasser}\\ \textcolor{white}{.}} & \shortstack{Dynamic  \\ \textcolor{white}{.}} & \shortstack{Maximize sum-rate \\ \textcolor{white}{.}} & \shortstack{Uses a hybrid DRL-based model for joint AP clustering and \\  beamforming that utilizes the DDQN and DDPG algorithms}\\
\hline
\shortstack{\textbf{This paper}\\ \textcolor{white}{.} \\ \textcolor{white}{.}\\ \textcolor{white}{.}} & \shortstack{Static\\ \textcolor{white}{.} \\ \textcolor{white}{.} \\ \textcolor{white}{.}}  & \shortstack{Maximize sum-rate \\ \textcolor{white}{.}\\ \textcolor{white}{.}\\ \textcolor{white}{.}} & \shortstack{The paper uses DDPG and D4PG algorithms \\  for beamforming based on (i) centralized learning, \\ (ii) centralized learning with distributed experiences, \\
and (iii) completely distributed learning.}\\
\hline

\end{tabular}
\label{SoA}
\end{table*}
\subsection{{Motivation and Contributions}}
\label{eq:Motivation}
{A fully centralized cell-free architecture  enhances the network coverage and transmission performance. The complexity of signal processing can be reduced by using scalable, dynamic, and low-complexity designs \cite{Cell_Free_Yasser}, and also by using suitable network- and UE-centric architectures and algorithms \cite{GCoMP}, \cite{Cell_Free_Scalability_1,Cell_Less_Beamforming_1}. 
The aforementioned low-complexity designs however sacrifice the performance gain of centralized processing. The complexity of solving the beamforming problem in a centralized manner (e.g. to obtain the beamforming vectors at the ECP) can however be reduced by using a distributed learning or processing approach while the detection of the  transmitted data is still performed at the central unit. Such a solution has not been investigated in the literature.} 

Several recent research papers have employed deep learning (DL) techniques to find approximately optimal solution of the beamforming problem in the uplink of cell-free networks based on centralized learning  \cite{Cell_free_Learning,DNN_Cell_Free}. However, these papers use a data-driven supervised-learning approach, which requires a large amount of training data for many different wireless propagation scenarios. Such a requirement is a severe limitation for practical implementation of a cell-free network. 
Furthermore, all previous DL-based beamforming techniques utilize the concept of deep Q-learning (DQN)  (or deep convolutional neural
network (DCNN)) with a \textit{discrete} action space. However, DQN is a RL algorithm that learns the state-value function and is designed to solve tasks with discrete action space. One way of using DQN on continuous action space is discretization. However, such a method suffers from the curse of dimensionality because the action space would grow exponentially with the size of the network. This results in sub-optimal beamforming designs. Therefore, in this paper, we propose to utilize DRL methods based on DDPG algorithm that handles \textit{continuous} action space for uplink beamforming in a cell-free network.  Specifically, in this paper, we investigate several practical beamforming designs for uplink cell-free network considering both centralized and distributed learning settings.  The main contributions of this paper can be summarized as follows:
\begin{itemize}
\item For a fully centralized uplink cell-free network, we formulate the optimal beamforming  problem in order to maximize the normalized sum-rate of the network. We then use a centralized learning approach to realize the uplink beamforming by using the deep deterministic policy gradient (DDPG) algorithm \cite{lillicrap2015continuous}.
\item We also propose a novel distributed experience-based beamforming system using the distributed distributional deterministic policy gradient (D4PG) algorithm \cite{barth2018distributed} in which the APs act as the distributed agents.  
\item To reduce the complexity of centralized learning, we propose a novel DRL-based beamforming scheme with distributed learning.
\item We evaluate the performance of the proposed beamforming designs numerically for different system settings considering non-orthogonal pilot contamination and shadow fading.

\item We compare the performances of the proposed DRL methods with those of the traditional conjugate method and MMSE method for beamforming. 
\end{itemize}  
The rest of this paper is organized as follows. The system model and the beamforming problem are presented in Section \ref{sec:SystemModel}. Section \ref{sec: Priliminaries} briefly reviews the  preliminaries of DRL and DRL-based beamforming. We design a centralized DRL solution, a distributed experienced-based  DRL solution, and another beamforming solution based on distributed learning for the beamforming problem in Section \ref{sec:CentralBeam}.  In Section \ref{sec:DistLearn}, we discuss the complexity of all of the DRL-based solutions. In Section \ref{sec:Numerical}, we present and discuss the numerical results. Section \ref{sec:Conclusion} concludes the paper. Definitions of the major system model parameters, DRL models parameters and the abbreviations used in the paper are given in Table \ref{table:symbol}.

\noindent
\textbf{Notations:} For a random variable (rv) $X$, $F_X(x)$ and $f_X(x)$, respectively, represent cumulative distribution function (CDF) and probability density function (PDF). Moreover, $\mathbb E[\cdot ]$ denotes the expectation. For a given matrix $\bm{A}\in \mathbb{C}^{M\times N}$, $\bm{A}^H$ represents the Hermitian transpose of $\bm{A}$. The PDF of a random variable $X$ following the Nakagami-${m}$ distribution is given by $f_X(x)=\frac{2{m}^{{m}}}{\Gamma({m}) {\Omega}^{{m}}}x^{2{m}-1}e^{-\frac{{m}}{\Omega}x^2}$. A random variable $X$ that follows the Gamma distribution is denoted by $X \thicksim \mathcal G (\alpha, \beta)$, with the PDF being $f_{X}(x) = \frac{\beta^\alpha }{\Gamma(\alpha)} x^{\alpha -1} e^{- \beta x }, \quad x >0$, where $ \beta >0$, $ \alpha \geq  1 $, and $ \Gamma(z)$ is the Euler's Gamma function. Moreover, the distributions $O\gd {\alpha_i} {\beta_i}$, $i=1, \dots, N$, refer to the $i$-th ascending-ordered rv from a set of $N$  gamma rvs with parameters $\alpha_i$ and $\beta_i$. 
\section{System Model, Assumptions, and Problem Formulation}
\label{sec:SystemModel}
\subsection{Cell-free Network Model}
\label{subsec:Network}
We consider the uplink of a wireless network with $M$ single-antenna APs and $K$ single-antenna UEs that have fixed locations within a certain coverage area, as shown in \textbf{Fig.~\ref{System_Model}}. Each AP has a baseband processor to partially process the signal received from all connected UEs. We refer to such an AP as an ``\textit{enhanced-AP}" (eAP) to distinguish them from the conventional APs. The eAPs are connected via the backhaul links, hence forming a cell-free network \cite{Cell_Less_1}. Such a  network architecture enables the distributed eAPs to collaboratively serve all the UEs within the network coverage area. The beamforming optimization can be done either in a edge cloud processor (ECP) or in the eAPs in a distributed manner (Fig.~1). The eAPs are connected to the ECP in a star network topology. All of the eAPs use the same available spectrum to serve the same set of UEs. That is, the frequency reuse factor is  1 and all of the eAPs are allocated the same total channel bandwidth.
 
\begin{table}[htb] 
	\scriptsize \caption{Definitions of major system model parameters, DRL model parameters and abbreviations}
	\label{table:symbol}
	\begin{center}
		\renewcommand{\arraystretch}{1.0}
		\setlength\tabcolsep{3pt}
		\begin{tabular}{p{4cm}p{7cm}}
			\hline
			Parameter,  & Definition    \\
			Abbreviation & \\
			\hline 
			
\textbf{System model parameters} & \\
IUI & Intra-UE Interference \\
rv & random variable\\
$M$ & Number of eAPs \\
$K$ & Number of UEs \\
eAP & Enhanced Access Point \\
ECP & Edge Cloud Processor \\
UE & User Equipment\\
$\left(\alpha,\beta\right)$ & parameters of gamma rv\\
$\mathcal{F}_{mk}$  & Large-scale fading parameter\\
$\kappa$ & Path-loss exponent\\
$\tau_c$ & Channel coherence time\\
$\tau_p$ & Duration of pilot symbol\\
$\bm{\phi}_k$ & Pilot symbol of UE $k$\\
$\mathcal{E}_{mk}$ & Pilot estimation factor at link $k\rightarrow m$\\
$\rho_k$ & $k$-th Pilot transmission power\\
$p_k$ & $k$-th UE transmission power\\
$\eta_z$ & Additive White Gaussian Noise (AWGN) at point $z$ \\
$\rho_z$ & AWGN variance at point $z$ \\
$\bm{W}$ & Beamforming matrix\\
$\bm{\omega}_m$ & $m$-th row of the beamforming matrix $\mathbf{W}$ \\
\hline
\textbf{DRL parameters} & \\

$Q(s,a)$ & Action-value function (DDPG critic network) \\
$\mu(s)$ & Policy function (DDPG \& D4PG actor network) \\
$Z(s,a)$ & Action-value distribution function (D4PG critic network) \\
$\mathcal{S}$ & State space \\
$s$ & Sample state of DRL environment \\
$\mathcal{A}$ & Action space \\
$a$  & Sample action of actor $\mu(.)$\\
$r$ & Reward of DRL environment \\
$\zeta$ & Discount factor of DRL model \\
$\bm{\theta}^Q$ & Weights of $Q(.,.)$ \\
$\bm{\theta}^\mu$ &  Weights of $\mu(.)$\\
$\bm{\theta}^Z$ & Weights of $Z(.,.)$ \\

 \hline
		\end{tabular}
	\end{center}\label{Symboles}
\end{table}   

\subsection{Channel Model}
Between the $k\text{-th}$ UE and the $m\text{-th}$ eAP, the channel gain is a random variable 
\begin{equation}
\label{eq:Channel_Model}
g_{mk}=\mathcal{F}^{1/2}_{mk}h_{mk}.
\end{equation}
In (\ref{eq:Channel_Model}), $h_{mk}$ is the small-scale channel fading gain that follows a Nakagami-$\mathcal{M}_{mk}$ distribution with spreading and shape parameters $\mathcal{M}_{mk}$ and $\Omega_{mk}$, respectively. Therefore, $|h_{mk}|^2 \thicksim \mathcal{G}\left(\alpha_{mk}, \beta_{mk}\right)$, where $\alpha_{mk} = \mathcal{M}_{mk}$ is the shape parameter and $\beta_{mk} = \frac{\mathcal{M}_{mk}}{\Omega_{mk}}$ is the inverse-scale parameter. Moreover, we have
\begin{equation}
\mathcal{F}_{mk} = {L}_{mk}^{-2\kappa}10^{\frac{\sigma_{\text{sh}}z_{mk}}{10}},
\end{equation}
where $L_{mk}=||d_{mk}||$ is the Euclidean distance between the $k\text{-th}$ UE and the $m$-th eAP. Also, $\kappa$ is the path-loss exponent ($\kappa\geq 2$) and $\sigma_{\text{sh}}$ is the shadow fading variance in dB. Furthermore, $z_{mk}=\sqrt{\delta}a_{mk}+\sqrt{1-\delta}b_{mk}$, where $0\leq \delta \leq 1$ is the transmitter-receiver shadow fading correlation coefficient \cite{CellLessShadow}. The parameters $a_{mk}\thicksim \mathcal{N} \left(0,1\right)$ and $b_{mk}\thicksim \mathcal{N} \left(0,1\right)$ characterize the shadow fading. We assume that $L_{mk}^{-2\kappa}$ and $\mathcal{F}_{mk}~ \forall~m$, $k$ are known. This assumption is justified since large-scale fading parameters can be easily estimated given the received signal strength (RSS).
Accordingly, we have $|g_{mk}|^2\thicksim \mathcal{G}(\alpha_{mk},\beta_{mk}/\mathcal{F}_{mk})$.
\subsection{Uplink Network Training}
\label{subsec:UplinkNet}
To estimate the CSI of the cell-free network, we train the network using a set of orthonormal pilot sequences. \textbf{Lemma \ref{Lemma_1}} presents the MMSE estimation constant of the channel gain $g_{mk}$ in the presented cell-free network. 
\begin{lemma}\label{Lemma_1}
Let ${\bm{\varphi}}_{k}=\left[\varphi_{k, 1} \dots \varphi_{k, \tau_p} \right]^H$, $||{\bm{\varphi}}_{k}||^2=1$, be the pilot sequence with sample size $\tau_p$ that is assigned to the $k$-th UE. The MMSE estimation constant $\mathcal{E}_{mk}$ of $g_{mk}$ is given by
\begin{equation}  
\mathcal{E}_{mk}=\frac{\sqrt{\tau_p\rho_k}\mathcal{F}_{mk} \left(\frac{\alpha_{mk}}{\beta_{mk}}\right)}{\tau_p\sum_{l = 1}^K\rho_l\mathcal{F}_{ml}\left(\frac{\alpha_{ml}}{\beta_{ml}}\right)|\bm{\varphi}_{mk}^H\bm{\varphi}_{ml}|^2 + 1}.
\label{Training_4}   
\end{equation} 
\end{lemma}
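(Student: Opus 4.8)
The plan is to identify $\mathcal{E}_{mk}$ as the gain of the linear minimum-mean-square-error (LMMSE) estimator of $g_{mk}$ that the $m$-th eAP forms during the pilot phase. First I would write the $\tau_p\times 1$ signal received at eAP $m$ while all $K$ UEs transmit their pilots simultaneously,
\begin{equation}
\bm{y}_m = \sum_{l=1}^{K}\sqrt{\tau_p\rho_l}\,g_{ml}\,\bm{\varphi}_l + \bm{\eta}_m,
\end{equation}
where $\bm{\eta}_m$ collects the AWGN samples, normalized to unit variance per entry. Correlating $\bm{y}_m$ with the pilot of UE $k$ produces the scalar sufficient statistic $\tilde{y}_{mk}=\bm{\varphi}_k^H\bm{y}_m$, which, using $\|\bm{\varphi}_k\|^2=1$, splits into a desired term $\sqrt{\tau_p\rho_k}\,g_{mk}$, a pilot-contamination term $\sum_{l\neq k}\sqrt{\tau_p\rho_l}\,g_{ml}\,\bm{\varphi}_k^H\bm{\varphi}_l$, and a filtered-noise term $\bm{\varphi}_k^H\bm{\eta}_m$.

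Next, since the Nakagami fading $h_{mk}$ (hence $g_{mk}$) carries a uniform random phase, every channel gain is zero-mean, so $\tilde{y}_{mk}$ is zero-mean and the LMMSE estimate has the scalar form $\hat{g}_{mk}=\mathcal{E}_{mk}\,\tilde{y}_{mk}$ with $\mathcal{E}_{mk}=\mathbb{E}[g_{mk}\tilde{y}_{mk}^*]/\mathbb{E}[|\tilde{y}_{mk}|^2]$ by the orthogonality principle. I would evaluate the numerator by invoking mutual independence of the $\{g_{ml}\}$ across UEs and their independence from the noise: all cross terms vanish, and only the $l=k$ contribution survives, giving $\mathbb{E}[g_{mk}\tilde{y}_{mk}^*]=\sqrt{\tau_p\rho_k}\,\mathbb{E}[|g_{mk}|^2]$. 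Substituting the earlier fact that $|g_{mk}|^2\sim\mathcal{G}(\alpha_{mk},\beta_{mk}/\mathcal{F}_{mk})$ and the Gamma mean $\alpha/\beta$ yields $\mathbb{E}[|g_{mk}|^2]=\mathcal{F}_{mk}\,\alpha_{mk}/\beta_{mk}$, reproducing the numerator of (\ref{Training_4}).

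For the denominator, the same independence argument removes every cross term in $\mathbb{E}[|\tilde{y}_{mk}|^2]$, leaving the sum of individual powers $\tau_p\sum_{l=1}^{K}\rho_l\,\mathbb{E}[|g_{ml}|^2]\,|\bm{\varphi}_k^H\bm{\varphi}_l|^2$ together with the noise contribution $\mathbb{E}[|\bm{\varphi}_k^H\bm{\eta}_m|^2]=\|\bm{\varphi}_k\|^2=1$ under unit noise variance. Replacing each $\mathbb{E}[|g_{ml}|^2]$ by $\mathcal{F}_{ml}\,\alpha_{ml}/\beta_{ml}$ and dividing the numerator by this expression gives exactly (\ref{Training_4}), with the pilot-correlation weights $|\bm{\varphi}_k^H\bm{\varphi}_l|^2$ capturing the contamination among non-orthogonal pilots.

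The main obstacle is conceptual rather than computational: justifying that the estimator is the LMMSE one whose gain is the deterministic constant $\mathcal{E}_{mk}$. Because $h_{mk}$ is Nakagami- rather than Rayleigh-distributed, the true conditional mean $\mathbb{E}[g_{mk}\mid\tilde{y}_{mk}]$ need not be linear in the observation, so I would either restrict the claim to the best \emph{linear} estimator or treat the aggregate contamination-plus-noise as Gaussian, under which LMMSE and MMSE coincide. The remaining work is careful bookkeeping: verifying that independence across UEs and between channel and noise annihilates all cross terms, and tracking the $\sqrt{\tau_p\rho_l}$ and pilot-inner-product factors so that each contamination term appears with its correct weight.
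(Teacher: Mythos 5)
Your proof follows essentially the same route as the paper's appendix: correlate the received pilot vector with $\bm{\varphi}_k$ to obtain the sufficient statistic $\dot{y}_{p,m}=\sqrt{\tau_p\rho_k}\,g_{mk}+\sum_{l\neq k}\sqrt{\tau_p\rho_l}\,g_{ml}\,\bm{\varphi}_k^H\bm{\varphi}_l+\bm{\varphi}_k^H\bm{\eta}_m$, form the estimator gain $\mathcal{E}_{mk}=\mathbb{E}\left[g_{mk}\dot{y}_{p,m}^*\right]/\mathbb{E}\left[|\dot{y}_{p,m}|^2\right]$, and evaluate both moments via independence across UEs and $\mathbb{E}\left[|g_{ml}|^2\right]=\mathcal{F}_{ml}\,\alpha_{ml}/\beta_{ml}$, exactly reproducing the numerator and denominator of the lemma. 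Your closing caveat about Nakagami fading (true conditional-mean MMSE need not be linear) is well taken, and the paper resolves it the same way you suggest: its appendix simply assumes the $g_{mk}$ are proper independent complex Gaussian rvs, under which the linear estimator coincides with the MMSE/MAP estimator.
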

\begin{proof}
See \textbf{Appendix A}.   
\end{proof}

%
%
In the above, we use the fact that $\mathbb{E}\left[|h_{mk}|^2\right] = \frac{\alpha_{mk}}{\beta_{mk}}$. If all of the UEs receive a set of mutually orthogonal pilot sequences (i.e. $|\bm{\varphi}^H_{k} \bm{\varphi}_{l}|=0, ~\forall~ k\neq l$), the estimated small-scale channel fading in (\ref{Training_2}) reduces to a scaled version of the exact fading gain plus a relatively small AWGN.
However, depending on the applications and due to the limitations concerning the length of the training sequence, non-orthogonal pilot signals have to be used among some active UEs. To decrease the computational complexity at the ECP, one approach would be to estimate the CSI at distributed eAPs, where the $m$-th eAP estimates channel gains $g_{mk}, \forall~k=1, \dots, K$.

\subsection{Uplink Data Transmission} \label{subsec:UplinkData}
In a cell-free network, each eAP receives the composite of signals from all UEs. For each UE, a weighted sum of composite signals from all eAPs is constructed to maximize the signal component while minimizing the interference plus noise. This process takes place at the baseband level in the ECP, before forwarding the detected signal of each UE to its final destination. Formally, the overall signal received by the ECP to be used in detecting the $k$-th UE's data is given by (\ref{Received_Signal_2}), where $w_{mk}$ is the $m$-th element of the beamforming vector related to the $k$-th UE such that $0\leq w_{mk}\leq 1$.
\begin{figure*}
\begin{equation}
\begin{split}
y_k  &=\sum_{m = 1}^{M} w_{mk} \left[\sum_{l=1}^K\hat{g}_{ml}\sqrt{p_l}x_l+{\tilde{\eta}}_{m}\right] \\
  &=  \underbrace{\sqrt{\tau_p\rho_k p_k}x_k\sum_{m = 1}^{M}w_{mk}\mathcal{E}_{mk}g_{mk}}_{\text{Desired Signal}} +  \sum_{m =1}^{M}w_{mk}\left[\underbrace{\sum_{l=1, l\neq k}^K \sqrt{\tau_p\rho_lp_l}x_l\mathcal{E}_{ml}g_{ml}}_{\text{Inter-UE Interference}}\right.  \\
&\;\;\;\;+\left. \underbrace{\sum_{v=1, v\neq k}^K\sqrt{\tau_p\rho_vp_k}x_k\mathcal{E}_{mk}|\bm{\varphi}_{k}^H \bm{\varphi}_{v}|g_{mv}+\sum_{q=1, q\neq k}^K \sum_{u=1, u\neq q}^K\sqrt{\tau_p\rho_{u}p_q}x_q\mathcal{E}_{mq}|\bm{\varphi}_{q}^H\bm{\varphi}_{u}|g_{mu}}_{\text{non-orthogonal pilot-related estimation error}}\right]  \\
  &\;\;\;\;+\sum_{m=1}^{M}w_{mk}\left[\underbrace{\sqrt{p_k}\mathcal{E}_{mk}x_k|\bm{\varphi}_{k}^H\bm{\eta}_{m}|+\sum_{z=1, z\neq k}^K\sqrt{p_z}\mathcal{E}_{mz}x_z|\bm{\varphi}_{z}^H\bm{\eta}_{m}|} _{\text{AWGN-related estimation error}}+\underbrace{\tilde{\eta}_{m}}_{\text{AWGN}}\right]. \label{Received_Signal_2} 
\end{split}
\end{equation}
\end{figure*}

Moreover, $p_{k}$ is the uplink transmission power of the $k$-th UE such that $0 \leq p_k \leq P_k$, where $P_k$ is the power budget of the $k$-th UE. Also, $x_k$ is the transmitted symbol of the $k$-th UE such that $\mathbb{E}[|x_k|^2]=1$, and $\tilde{\eta}_{m}$ is the AWGN at the $m$-th eAP with $\tilde{\eta}_{m}\thicksim \mathcal{CN}\left(0, 1/2\right)$. The instantaneous signal-to-interference-plus-noise ratio (SINR) for the $k$-th UE is given by (\ref{SINR_1}) \cite{SINR_Expression}
%
\begin{figure*}
\begin{equation}
\gamma_k=\frac{\sum_{m=1}^Mw_{mk}^2|\tilde{g}_{mk}|^2}{\sum_{m=1}^Mw_{mk}^2\left[\sum_{l=1, l\neq k}^K|\tilde{g}_{ml}|^2+ \sum_{v=1, v\neq k}^K|\tilde{g}_{m{v}}|^2+\sum_{q=1, q\neq k}^K\sum_{u=1, u\neq q}^K|\tilde{g}_{m{u}}|^2\right]+1},
\label{SINR_1}
\end{equation}
\end{figure*}  
where $|\tilde{g}_{mi}|^2\thicksim \mathcal{G}\left(\tilde{\alpha}_{mi},\tilde{\beta}_{mi} \right)$, for $i=k,l,v,u$. Moreover, $\tilde{\alpha}_{mi}=\mathcal{M}_{mi}$, for $i=k,l,v,u$. In addition, $\tilde{\beta}_{mi}=\frac{\mathcal{M}_{mi}\dot{\sigma}_{mi}}{\Omega_{mi}\mathcal{F}_{mi} \tau_p\rho_ip_i\mathcal{E}_{mi}^2}$, for $i=k,l$. Also, $\tilde{\beta}_{mv}=\frac{\mathcal{M}_{mk}\dot{\sigma}_{mk} }{\Omega_{mk}\mathcal{F}_{mvs}\tau_p\rho_vp_k\mathcal{E}_{mk}^2|\bm{\varphi}_{k}^H\bm{\varphi}_{v}|^2}$. Similarly,   $\tilde{\beta}_{mu}=\frac{\mathcal{M}_{mq}\dot{\sigma}_{mk}}{\Omega_{mq}\mathcal{F}_{mu}\tau_p\rho_{u}p_{q}\mathcal{E}_{mq}^2|\bm{\varphi}_{q}^H\bm{\varphi}_{u}|^2}$, with

\begin{equation}
\dot{\sigma}_{mk} = \sum_{m=1}^M w_{mk}^2 \left[ \sum_{t=1}^{\tau_p} \left(p_k\mathcal{E}_{mk}^2 \varphi_{k,t}^2  + \sum_{z=1, z\neq k}^K~p_z\mathcal{E}_{mz}^2\varphi_{z,t}^2\right)+ 1\right]. \notag
\end{equation}
  
Equation (\ref{SINR_1}) is concluded from the following: When both the transmitter and the receiver know the estimated CSI, one can replace the second moments of channel fading parameters with their instantaneous values. For example, the numerator of (\ref{SINR_1}) can be written as
$\mathbb{E}\left[|\bm{\mathcal{G}}_k^H\bm{\mathcal{W}}_kx_k|^2\right]=\bm{\mathcal{W}}_k^H \bm{\mathcal{R}}_k\bm{\mathcal{W}}_k$,
where $\bm{\mathcal{G}}_k=\left[C_{1k}g_{1k}\dots C_Mg_{\mathcal{D}_1k} \right]$, $\bm{\mathcal{W}}_k=\left[ w_1 \dots w_M \right]$, and
$C_{mk}=\sqrt{\tau_p\rho_kp_k}\mathcal{E}_{mk}G_{mk}$. Moreover, $\bm{\mathcal{R}}_k$ represents the auto-correlation matrix of $k$-th UE's signal and is defined as $\bm{\mathcal{R}}_k=\mathbb{E}\left[|\bm{\mathcal{G}}_k^H\bm{\mathcal{G}}_k|^2\right]= \bar{\bm{\mathcal{G}}}_k\bar{\bm{\mathcal{G}}}_k^H + {\bm{C}}_{\bm{\mathcal{G}}_k}$,
where $\bar{\bm{\mathcal{G}}}_k$ and ${\bm{C}}_{\bm{\mathcal{G}}_k}$ are the mean and covariance matrices of $\bm{\mathcal{G}}_k$, respectively. If both the transmitter and the receiver know the instantaneous CSI, $\bm{\mathcal{R}}_k$ yields $\bm{\mathcal{R}}_k = \bm{\mathcal{G}}_k \bm{\mathcal{G}}_k^H$ \cite{SINR_Expression}. Using a similar procedure, one can characterize the interference power component of (\ref{SINR_1}). Additionally, one can compute the power of AWGN component by utilizing the fact that all noise samples are i.i.d circularly symmetric Gaussian rvs with zero-mean and constant variance $\sigma_{m}^2=\sigma^2=1/2, ~\forall~ m$. 
\subsection{Problem Formulation}
\label{subsec:GPF}
Since all UEs transmit in the same time-frequency channel, the receiver can deploy successive interference cancellation (SIC) to increase the UEs' SINR values \cite{8491054}, thereby enhancing the per-UE performance. To this end, the beamforming vector of each UE is designed optimally to separate the signal component of that UE from other components at least by some value $\bar{P_s}$ referred to as {\it receiver sensitivity}. More precisely, to decode UE $k$, the receiver first decodes the signal components of other UEs with higher power. It subtracts these components from the overall signal. It then decodes the desired signal treating the remaining UEs' signals, i.e. those with lower power, as interference. Formally, when detecting the signal from the $k$-th UE, 
we first arrange the received signal components from the UEs in an ascending order such that $\sum_{m = 1}^{M}~|\check{g}_{m1}|^2 \leq \dots \leq \sum_{m = 1}^{M} |\check{g}_{mk}|^2~~\leq\dots\leq\sum_{m = 1}^{M} |\check{g}_{mK}|^2$ \cite{GCoMP}. The beamforming vector of the $k$-th UE (denoted by $\bm{w}_{k}=\left[w_{1k}\dots w_{Mk}\right]$) shall maximize an objective function which is a function of $\gamma_k, ~\forall k=1, \dots, K$. For this, $\gamma_k$ in (\ref{SINR_1}) is modified as in (\ref{SINR_3}),
%
\begin{figure*}
\begin{equation}
\gamma_k=\frac{\sum_{m = 1}^{M}w_{mk}^2|\check{g}_{mk}|^2}{\sum_{m = 1}^{M}w_{mk}^2\left[\sum_{l=k+1}^{K}|\check{g}_{ml}|^2+\sum_{v=1, v\neq k}^K|\check{g}_{{m}{v}}|^2+\sum_{q=1, q\neq k}^K\sum_{u=1, u\neq q}^K|\check{g}_{{m}{u}}|^2\right]+1}.
\label{SINR_3}
\end{equation}
\end{figure*}
%
in which $|\check{g}_{mi}|^2 \thicksim O\gd {\check{\alpha}_{mi}} {\check{\beta}_{mi}}$ and $\check{\alpha}_{mi}=\mathcal{M}_{mi}$ for $i=k,l,v,u$. Moreover, $\check{\beta}_{mi}=\frac{ \mathcal{M}_{mi}\dot{\sigma}_{mk}}{\Omega_{mi}\mathcal{F}_{mi}\tau_p \rho_i p_i \mathcal{E}_{mi}^2}$ for $i=k,l$. In addition, $\check{\beta}_{mv}=\frac{\mathcal{M}_{mk}\dot{\sigma}_{mk}}{\Omega_{mk}\mathcal{F}_{mv}  \tau_p\rho_vp_k\mathcal{E}_{mk}^2|\bm{\varphi}_{k}^H\bm{\varphi}_{v}|^2}$, and $\check{\beta}_{mu}=\frac{\mathcal{M}_{mq}\dot{\sigma}_{mk}}{\Omega_{mq}\mathcal{F}_{mu}\tau_p\rho_{u}p_{q}\mathcal{E}_{mq}^2|\bm{\varphi}_{q}^H\bm{\varphi}_{u}|^2}$. Also, we have
 \begin{dmath}
 \dot{\sigma}_{mk}=\sum_{m = 1}^{M} w_{mk}^2\left[\sum_{t=1}^{\tau_p}\left(p_k\mathcal{E}_{mk}^2\varphi_{k,t}
 +\sum_{l=1, l\neq k}^Kp_l\mathcal{E}_{ml}^2\varphi_{l,t}\right)+ 1\right]. \notag
\end{dmath}       
The beamforming problem can be then formulated as 
\begin{equation} 
\begin{aligned}
 \textbf{P}_1:& ~~\underset{ \bm{W}\in [0~ 1]^{M \times K}}{\text{maximize}}~
 \sum_{k=1}^K\log_2\left(1+\gamma_{k}  \right)\\
\text{s.t.}~&~\textbf{C}_1: \sum_{m=1}^{M} \left(w_{m \delta_l}^2-\sum_{i=\delta_l+1}^{l}w_{mi}^2 \right)\bar{\gamma}_{{m}l}\geq \bar{P_s}, \\
&  ~  \textbf{C}_2:w_{mk}^2\in \left[0, 1\right],~~\forall~m, k,
\\
&~~\forall k,  \forall~\delta_l= 1, \dots, l-1\;\mbox{, and}\; l=2, \dots, K,\\
\end{aligned}
\label{OptimizationProb}
\end{equation}
where $\bar{\gamma}_{{m}l}=p_l| {g}_{ml}|^2$, and $\bm{W}\in \left[0,1 \right]^{M \times K}$ is the overall beamforming matrix in which $\bm{w}_k=[w_{1k}, \dots, w_{Mk}]$. Note that in (\ref{OptimizationProb}), the objective  is a function of $\gamma_k$, which is a function of $\bm{\omega}_k$. The constraint $\textbf{C}_1$ corresponds to the $\sum_{l=2}^K(l-1)=\frac{K(K-1)}{2}$ conditions of successful SIC operation with a receiver sensitivity of $\bar{P_s}$. 
Note that, for maximizing the minimum transmission rate (max-min fairness), the objective function of the optimization problem will be $\underset{k=1,...,K}{\text{minimum}} \log_2\left(1+\gamma_{k} \right)$. The receiver deals only with the measured (estimated) channel values that include the estimation error and the AWGN component. The SINR value after the SIC operation decreases due to  pilot contamination. 

The globally optimal solution of problem $\textbf{P}_1$ in (\ref{OptimizationProb}) is the one that gives the best performance among all possible matrices $\bm{W}$. However, $\textbf{P}_1$ is a non-convex problem due to the non-convexity of the objective function. Accordingly, the globally optimal solution of  $\textbf{P}_1$ in (\ref{OptimizationProb}) can only be obtained by an exhaustive search. The computational complexity of such as exhaustive search will be combinatorial in terms of the number of UEs and eAPs, and therefore, will not be feasible for practical implementation.
Iterative optimization methods can be used, for example, at the ECP of the cell-free system, that can converge to locally optimal solutions after a finite number of iterations. However, this will require all the CSI information to be transmitted to the ECP resulting in increased network overhead as well as high processing complexity. DRL-based solutions based on distributed learning can reduce the network overhead as well as processing complexity at the ECP. More specifically, centralized learning with distributed agents' experience can improve the quality of the solution, while a completely distributed learning-based scheme will reduce the signal processing complexity at the ECP and the network overhead significantly.
In the later sections of this paper, we will develop three DRL-based solutions, namely, the  fully centralized training and learning-based method, the centralized learning-based method with distributed training, and the fully distributed training and learning-based method, for the formulated problem. The centralized learning-based method can serve as a benchmark for the two other methods.
\section{Background on DRL and Beamforming Optimization}
\label{sec: Priliminaries}
\subsection{DRL Preliminaries}

DRL is a solution method for Markov decision processes (MDPs). An MDP is characterized by a tuple $(\mathcal{S}, \mathcal{A}, \mathcal{P}, \mathcal{R}, \zeta$), where $\mathcal{S}$ and $\mathcal{A}$ represent the state space and the agent's action space, respectively. 
Moreover, $\mathcal{P}$ is the transition probability matrix, where $\mathcal{P}(s,a,s') \in [0,1]$ is the probability that state $s$ changes to state $s'$ by selecting action $a$. $\mathcal{R}: \mathcal{S} \times \mathcal{A} \rightarrow \mathbb{R}$ defines the expected reward of performing action $a$ at state $s$. 
Finally, $\zeta$ is the reward's discount factor. The goal is to select the best action at each step so as to maximize the accumulated discounted reward. In a DRL model, a neural network learns to map the states to values or state-action to Q-values using the historical outcomes. DRL algorithms fall into three categories: (i) \textit{value-based} methods that aim to learn a value function like deep Q-learning (DQN) algorithm \cite{van2016deep}, (ii) \textit{policy-based} methods that learn the optimal policy function, and (iii) \textit{actor-critic} methods that combine value-based and policy-based methods.


In this paper, we use two actor-critic algorithms, namely, DDPG \cite{lillicrap2015continuous} and D4PG \cite{barth2018distributed}. These methods are suitable to optimize beamforming, since in both DDPG and D4PG, the action space can be continuous. Moreover, for D4PG, the exploration can be distributed among multiple agents. The DDPG algorithm uses state-action Q-value critic based on deep Q-learning \cite{van2016deep} and updates the policy using its critic gradients. D4PG builds on the DDPG approach by making several enhancements such as the Q-value estimation and the distributed collection of experiences. Note that an \textit{experience} is the process of exploring a new action by executing it in the environment or simulating it, thereby receiving some reward and observing the new state. 

\subsection{DRL Agent for Beamforming Optimization}
\label{subsec:DRLAgent}  
For a cell-free network with $M$ APs and $K$ UEs ({Fig.~\ref{System_Model}}), we develop a DRL model to optimize the beamforming matrix $\bm{W}$. This matrix  includes beamforming vectors of all UEs within the network coverage area given the complete CSI. Note that, since the beamforming vector of the $k$-th UE is given by $\bm{w}_{k}=\left[w_{1k}\dots w_{Mk}\right]$,  the matrix $\bm{W}$ has a dimension of $M \times K$. 

We cast the beamforming optimization problem as an MDP problem. We then train a DRL model where an agent learns by interacting with a cell-free network as the environment. Such a DRL model can be implemented either centrally or in a distributed manner. The design of the environment for a cell-free network includes the definition of state $\bm{s}$, the action $\bm{a}$, and the immediate reward function $\bm{r}$, needed for the DRL algorithm to estimate the policy and the Q-values. To improve the training process, new actions need to be explored by the agent. Therefore, we add a noise generated from a random process at each action taken in the training phase. Although originally an Ornstein–Uhlenbeck (OU) process was proposed for exploration, later results showed that, an uncorrelated, zero mean Gaussian noise gives the same performance [38]. Therefore, due to its simplicity of implementation, the Gaussian noise process is preferred to the OU process for exploration. The state  can be any key performance indicator. While the action of this model is the optimization variable $\bm{W}$ (the beamforming matrix) of the problem $\textbf{P}_1$, the reward can be any performance metric that jointly quantifies the performance of all active UEs. In {Table \ref{tab:ddpg_center}}, we summarize the design parameters of a DRL model and the corresponding measures in the cell-free network.  
\begin{table}[h]
\centering
\caption{Design parameters for the DRL model}
\begin{tabular}{|c|c|}
\hline
\textbf{Environment Variables} & \textbf{System Equivalence} \\
\hline \hline
State $s = \{s_1, ...., s_K\}$ & UE SINR: $\{ \gamma_1,...,\gamma_K\}$ \\
\hline
Reward $r$ & Sum-rate for all UEs: $\sum_{k=1}^K \log_2(1 + \gamma_k)$ \\
\hline
Action $a$ &  Beamforming matrix: $\bm{W}$ \\
\hline
\end{tabular}
\label{tab:ddpg_center}
\end{table}
\section{DRL-Based Centralized and Distributed Beamforming Methods}
\label{sec:CentralBeam}
\subsection{The DDPG Algorithm: A Fully Centralized Solution}
\label{subsec:ddpg-algo}
 In this section, we propose a DRL-based centralized solution for the beamforming problem in (\ref{OptimizationProb}) using the DDPG learning algorithm. This solution serves as a benchmark for the other beamforming techniques in the subsequent sections.
The actor-critic algorithm in DDPG can handle continuous state space and continuous action space. Since the elements $w_{ij}$ of $\bm{W}$ are  continuous in the range $[0,1]$, to find the optimal beamforming matrix $\bm{W} \in \left[0,1 \right]^{M \times K}$, we use DDPG. It uses two neural networks as function estimators: (i) the critic, $Q(s,a)$, whose parameters are $\theta^Q$ and calculates the expected return given state $s$ and action $a$; (ii) the actor, $\mu(s)$, whose parameter is $\theta^\mu$ and determines the policy. In DDPG, the actor directly maps states to actions instead of outputting a probability distribution across a discrete action space. The starting point for learning an estimator to $Q^*(s,a)$ is the Bellman equation given by
\begin{equation}\label{bellman_equation}
Q^*(s,a) = \mathbb{E}_{(s,a,r,s') \in \mathcal{R}} \left[ r(s,a) + \zeta \;\underset{a'}{\text{max}}\;Q^*(s',a') \right],
\end{equation}
where $\mathcal{R}$ is the set of the experiences and $\zeta$ is the discount factor.

Computing the maximum over actions in the target is quite challenging in the continuous action spaces. In particular, the training of the Q-value function involves minimizing the loss function $L(\theta^Q)$ in equation (\ref{eq:q_loss}). However, the target value $y_t$ that we want our Q-value to be close to depends on the weights of critic network $Q$. This dependency will lead to an instability in the learning of critic network $Q$ which will affect the learning of the policy $\mu$.
\begin{equation}
\begin{split}
\label{eq:q_loss}
&L(\theta^Q) = \mathbb{E}_{s_t \sim \rho^\mu, a_t \sim \beta, r_t \sim E} \Big[ (Q(s_t,a_t)|\theta^Q)-y_t)^2\Big]. \\
&y_t = r(s_t,a_t)+\zeta Q(s_{t+1},\mu(s_{t+1})|\theta^Q).
\end{split}
\end{equation}
To mitigate this instability, in \cite{lillicrap2015continuous} a lagged version of the actual critic network and the actor network are created and updated through the Polyak averaging at each iteration to enhance the convergence chances.
DDPG handles this coupling and instability by using two target networks, namely, a critic target network $Q'(s,a)$ and a policy target network $\mu'(s)$. These two networks use a set of parameters $\theta^{Q'}$ and $\theta^{\mu'}$ updated by \textit{Poylak averaging} with factor $\tau$ as shown at (\ref{Theta_1_DDPG}) and (\ref{Theta_2_DDPG}) in \textbf{Algorithm 1}. These additional networks represent a lagged version of the actual critic network and actor network that are updated at each iteration and they increase the chance of convergence of the algorithm.
Therefore, in \textbf{Algorithm \ref{algo:ddpg}}, the critic network minimizes the loss function presented at (\ref{Loss_Function_DDPG}). In policy learning, DDPG learns $\theta^\mu$ that maximizes $Q$. At every round, it maximizes the expected return as
\begin{equation}\label{policy_loss}
J(\theta^\mu) = \mathbb{E} \Big[Q(s,a)\Big|s = s_t, a=a_t\Big],
\end{equation}
and updates the weights $\theta^\mu$ by following the gradient of (\ref{policy_loss})
\begin{equation}
\label{policy_gradient}
\nabla J_{\theta^\mu}(\theta) \approx \nabla_a Q(s,a). \nabla \mu(s|\theta^\mu).
\end{equation}
%
\begin{algorithm}
\caption{DDPG learning process \cite{lillicrap2015continuous} (implemented in the ECP)}
\label{algo:ddpg}
\begin{algorithmic}[1]
\State Randomly initialize the actor network $\mu(s|\theta^\mu)$ and the critic network $Q(s, a|\theta^Q)$ with weights $\theta^\mu$, $\theta^Q$.
\State Initialize the target networks $Q'$ and $\mu'$ with weights $\theta^{Q'} \leftarrow \theta^{Q}$ and $\theta^{\mu'} \leftarrow \theta^\mu$.
\State Initialize the replay buffer $\mathcal{R}$.
\For{episode $ = 1\; to\; $ max-number-episodes}
\State Initialize a random process $\mathcal{N}$ for action exploration.
\State Observe the initial state $s_1$.
\For{$t = 1 \; to \;$ max-episode-steps}
\State Perform action $a_t = \mu(s_t) + \mathcal{N}_t$. Observe reward $r_t$ and the next state $s_{t+1}$.
\State Store the transition $(s_t,a_t,r_t,s_{t+1})$ in $\mathcal{R}$.
\State Let $N$ be the batch size. Sample random mini batch of $N$ transitions from $\mathcal{R}$.
\State Update the critic by minimizing the loss
\begin{equation}\label{Loss_Function_DDPG}
L = \frac{1}{N} \sum_{j}(y_j - Q(s_j,a_j|\theta^Q))^2,
\end{equation}
\begin{equation}
y_j = r_j + \zeta Q'(s_{j+1},\mu'(s_{j+1})|\theta^{\mu'})|\theta^{Q'}).
\end{equation}
\State Update the actor policy using sampled policy gradient ascent as
\begin{equation}
\nabla_{\theta^\mu}J \approx \frac{1}{N} \sum_j \nabla_a Q(s,a|\theta^Q)|_{s=s_j,a=\mu(s_j)} \nabla_{\theta^\mu} \mu(s|\theta^\mu)|_{s=s_j}.
\end{equation}
\State Update the target networks as
\begin{equation}\label{Theta_1_DDPG}
\theta^{Q'} \leftarrow \tau \theta^Q + (1-\tau)\theta^{Q'},
\end{equation}
\begin{equation}\label{Theta_2_DDPG}
\theta^{\mu'} \leftarrow \tau \theta^\mu + (1-\tau) \theta^{\mu'}.
\end{equation}
\EndFor
\EndFor
\end{algorithmic}
\end{algorithm}
This update rule represents the Deterministic Policy Gradient theorem \cite{pmlr_v32_silver14}. The term $\nabla_a \,Q(s,a)$ is obtained from the backpropagation of the Q-network $Q(s,a|\theta^Q)$ w.r.t the action input $\mu(s|\theta^{\mu})$. \textbf{Algorithm \ref{algo:ddpg}} summarizes the DDPG learning process with settings presented in Table \ref{tab:ddpg_center} (i.e. the states $s_t$, actions $a_t$ and rewards $r_t$ are defined as UE SINR $\{ \gamma_1,...,\gamma_K\}$,  the beamforming matrix $W$, and the sum-rate $\sum_{k=1}^K \log_2(1 + \gamma_k)$, respectively, at the $t$-th learning step).
\subsection{D4PG-Based Beamforming With Distributed Agents' Experience}
\label{sec:D4PGBeam}
In the previous section, we discussed the fully centralized DDPG-based beamforming scheme as a replacement of conventional centralized optimization at the ECP. In this section, we take a step toward distributed beamforming in cell-free networks. We propose a DRL-based beamforming scheme which exploits the experiences of the  distributed actors (e.g. eAPs) that belong to a centrally located agent (e.g. ECP). This scheme is based on the distributed distributional deep deterministic policy gradient (D4PG)  algorithm \cite{d4pg}. 
Generally, the D4PG algorithm applies a set of improvements on DDPG and make it run in a distributional fashion. These improvements enable D4PG to outperform  DDPG \cite{DDPG_Vs_D4PG}. This is achieved by having a multiple actor neural networks to gather independent experiences and feed it to a single replay buffer. However, it contains a single critic network that samples the independently gathered experiences from the replay buffer to explore a new Q-value. Traditionally,  the independent actors, the replay buffer, and the critic of the D4PG are implemented within the same operation area/module. In this paper, we propose an implementation for the D4PG by setting the number of independent actors to $M$ (equal to the number of eAPs) and placing an actor at each eAP, and place the replay buffer and the critic network at the ECP. As illustrated in Fig. \ref{fig:d4pg_model}, the actor networks implemented in the eAPs generate independent experiences that are sent to the ECP to be pushed into the replay buffer, which combines all experiences explored by the actor networks. In \textbf{Algorithm \ref{algo:actor}}, we describe the exploration technique used in the actor network. This approach improves the quality of training data since the eAPs simultaneously generate multiple experiences with different exploration processes. 

\begin{figure*}[htb]
\centering
\includegraphics[scale = 0.37]{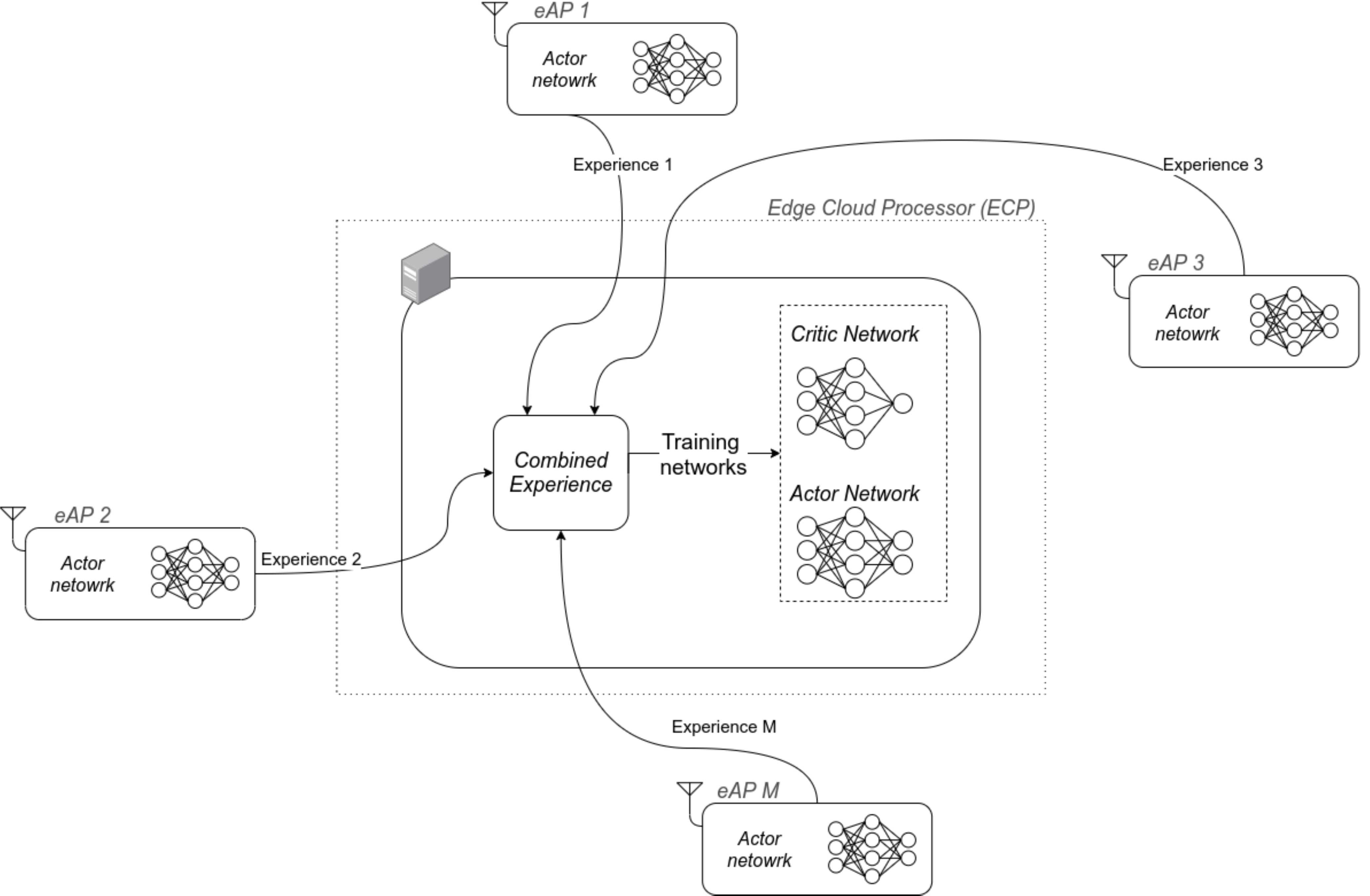}
\caption{D4PG-based beamforming design.}
\label{fig:d4pg_model}
\end{figure*}
The design parameters of the DRL model for D4PG are represented in Table \ref{tab:ddpg_center}, therefore the states $s_t$, actions $a_t$ and rewards $r_t$ are defined as the UE SINR $\{ \gamma_1,...,\gamma_K\}$, the beamforming matrix $W$, and the sum-rate $\sum_{k=1}^K \log_2(1 + \gamma_k)$, respectively, at the $t$-th learning step. D4PG is an actor-critic method that enhances the DDPG algorithm to perform in a distributed manner, thereby improving the estimations of the Q-values. Below we discuss these enhancements to the D4PG algorithm, which we explicitly describe in \textbf{Algorithm \ref{algo:d4pg}}.
\begin{algorithm}
\caption{D4PG-Based Beamforming  With Distributed Agents' Experience}
\label{algo:d4pg}
\begin{algorithmic}[1]
\State \textbf{Input:} batch size $P$, trajectory length $N$, number of actors $K$, replay size $R$, initial learning rates $\alpha_0$ and $\beta_0$, time for updating the target networks $t_{target}$, time for updating the actors policy $t_{actor}$.
\State Initialize actor and critic networks $(\theta^\mu$,$\theta^Z)$ at random.
\State Initialize target weights $(\theta^{\mu'}$,$\theta^{Z'}) \leftarrow (\theta^\mu$,$\theta^Z)$.
\State Launch $K$ actors and replicate their weights $(\theta^{\mu'}$,$\theta^{Z'})$.
\For{$t=1,\dots,T$}
\State Sample $P$ transitions $(s_{i:i+N},a_{i:i+N},r_{i:i+N})$ of length $N$ from replay with priority $p_i$.
\State Construct the target distributions $Y_i = \sum_{n=0}^{N-1} \zeta^n r_{i+n} + \zeta^N Z'(s_{i+N},\mu'(s_{i+N}|\theta^{\mu'})|\theta^{Z'})$.
\State Compute the actor and critic updates
$$ \delta_{\theta^Z} = \frac{1}{P} \sum_{i} \nabla_{\theta^Z} (R\,p_i)^{-1} d(Y_i,Z(s_i,a_i|\theta^Z)),$$
$$\delta_{\theta^\mu} = \frac{1}{P} \sum_{i} \nabla_{\theta^\mu} \mu(s_i|\theta^\mu) \mathbb{E}[\nabla_a Z(s_i,a|\theta^Z)]|_{a=\mu(s_i|\theta^\mu)}.$$
\State Update network parameters $\theta^\mu \leftarrow \theta^\mu + \alpha_t\; \delta_{\theta^\mu}$,\; $\theta^Z \leftarrow \theta^Z + \beta_t \;\delta_{\theta^Z}$.
\State If $t \,\,\mathbf{mod}\,\, t_{target} = 0$, update the target networks $(\theta^{\mu'},\theta^{Z'}) \leftarrow (\theta^\mu,\theta^Z)$.
\State If $t \,\,\mathbf{mod}\,\, t_{actors} = 0$, replicate network weights to the actors.
\EndFor
\State \textbf{Return} policy parameters $\theta^\mu$.
\end{algorithmic}
\end{algorithm}
\begin{algorithm}
\caption{Actor}
\label{algo:actor}
\begin{algorithmic}[1]
\Repeat
\State Set $t=0$
\State Initialize random process $\mathcal{N}$ at each actor
\State At step t, select action $a_t = \mu(s_t|\theta^\mu) + \mathcal{N}_t$. Receive reward $r_t$ and observe state $s_t'$.
\State Send the experience $(s_t,a_t,r_t,s_t')$ to ECP to be stored in the replay buffer.
\Until{learner finishes, i.e. \textbf{Algorithm \ref{algo:d4pg}} stops.}
\end{algorithmic}
\end{algorithm}
   
The main features of the D4PG algorithm are described below.

\subsubsection{Distributional Critic}
\label{subsubsub:Dist}
In D4PG, the Q-value is a random variable following some distribution $Z$ with parameters $\theta^Z$, thus $Q_{\theta^Z}(s,a) = \mathbb{E}[Z(s,a|\theta^Z)]$. The objective function for learning the distribution minimizes some measure of the distance between the distribution estimated of the target critic network and that of the critic network, e.g. the binary cross-entropy loss. Formally,
\begin{equation}
L(\theta^Z) = \mathbb{E}[d(T_{\mu}\;Z'(s,a|\theta^{Z'}),Z(s,a|\theta^Z)],
\end{equation}
where $T_\mu$ is the Bellman operator. The deterministic policy gradient update yields
\begin{equation}
\begin{split}
\nabla_{\theta^\mu} J(\theta^\mu) &\approx \mathbb{E}_\rho[\nabla_{\theta^\mu} \mu(s|\theta^\mu) \nabla_a Q(s,a|\theta^Z)|_{a=\mu(s|\theta^\mu)}] \\
& = \mathbb{E}_\rho[\nabla_{\theta^\mu} \mu(s|\theta^\mu) \mathbb{E}[\nabla_a Z(s,a|\theta^Z)]|_{a=\mu(s|\theta^\mu)}].
\end{split}
\end{equation}
\subsubsection{N-step Returns}
\label{subsubsub:Step}
An agent in D4PG computes the $N$-step Temporal Difference (TD) target instead. Formally,
\begin{equation}
(T_{\mu}^N Q)(s_0, a_0) =  r(s_0, a_0) + 
 \mathbb{E}\left[\sum_{n=1}^{N-1} r(s_n,a_n) + \zeta^N Q(s_N,\mu_\theta(s_N))|s_0,a_0\right].
\end{equation}
\subsubsection{Multiple Distributed Parallel Actors}
\label{subsubsub:Mult}
This process takes place in parallel within $K$ actors, each one generating samples independently. The samples are collected in a replay buffer from which a learner samples batches to update the weights of the networks.
\subsubsection{Prioritized Experience Replay (PER)}
\label{subsubsub:Peri}
Finally, D4PG collects $R$ samples from the replay buffer with non-uniform probability $p_i$. The $i$-th sample is selected with priority $(Rp_i)^{-1}$ that also indicates the importance of the sample.

\subsection{DRL-Based Beamforming With Distributed Learning}
\label{sec:DistLearn}
In the previous section, we proposed a D4PG learning method to train a policy that predicts the beamforming matrix given a cell-free network environment. We enhanced the learning performance by allocating an agent per eAP and then utilizing the distributed experience collected from several agents. The enhanced method converges faster and shows better performance compared to the fully centralized DDPG solution. However, the proposed D4PG method still conducts the learning process at the ECP that involves a large body of the computational tasks. 

In this section, we propose a cell-free beamforming scheme, which, in addition to distributing the agents' experiences, splits the learning process of the DRL among all eAPs. In such a model, the eAPs divide the computational tasks equally and the ECP only performs limited control and coordination task. In this scheme, every eAP is responsible to find the optimal beamforming vector for all UEs. To this end, all vectors of the overall  beamforming matrix are considered to be constants, to be simultaneously found by other eAPs. As an example, for a cell-free network with $M$ eAPs and $K$ UEs, eAP$_m$ is responsible to optimize $\bm{\omega}_m=\bm{W}\left(m,[1,~ \dots,~k] \right)$.
Thus, eAP$_m$ solves the following subproblem:
\begin{equation}
\begin{aligned}
\textbf{P}_m:&~ \underset{ \bm{\omega}_m\in \left[0~1 \right]^{1 \times K}}{\text{maximize}}~
\sum_{k=1}^K\log_2\left(1+\gamma_{k}  \right)\\
\text{S.t.}~&~\textbf{C}_1: \sum_{m=1}^{M} \left(w_{m \delta_l}^2-\sum_{i=\delta_l+1}^{l}w_{mi}^2 \right)\bar{\gamma}_{{m}l}\geq \bar{P_s}, \\
&  ~  \textbf{C}_2:w_{mk}^2\in \left[0,1\right],
\\
&~~\forall k,  \forall~\delta_l= 1, \dots, l-1\;\mbox{, and}\; l=2, \dots, K.\\
\end{aligned}\label{OptimizationProb_m}
\end{equation}

In the following, we present a new system design for the cell-free network environment where the eAPs interact with each other to find the optimal beamforming vector. We intend to solve the optimization problem in (\ref{OptimizationProb}) through multiple eAPs solving the problem in (\ref{OptimizationProb_m}). Therefore, as shown in Fig. \ref{DRL_Dist_Learning}, we distribute the learning process among the eAPs by implementing a DDPG agent in each eAP with its local experience buffer, actor network, and critic network.

\begin{figure*}[htb]
\centering
\includegraphics[width=\textwidth]{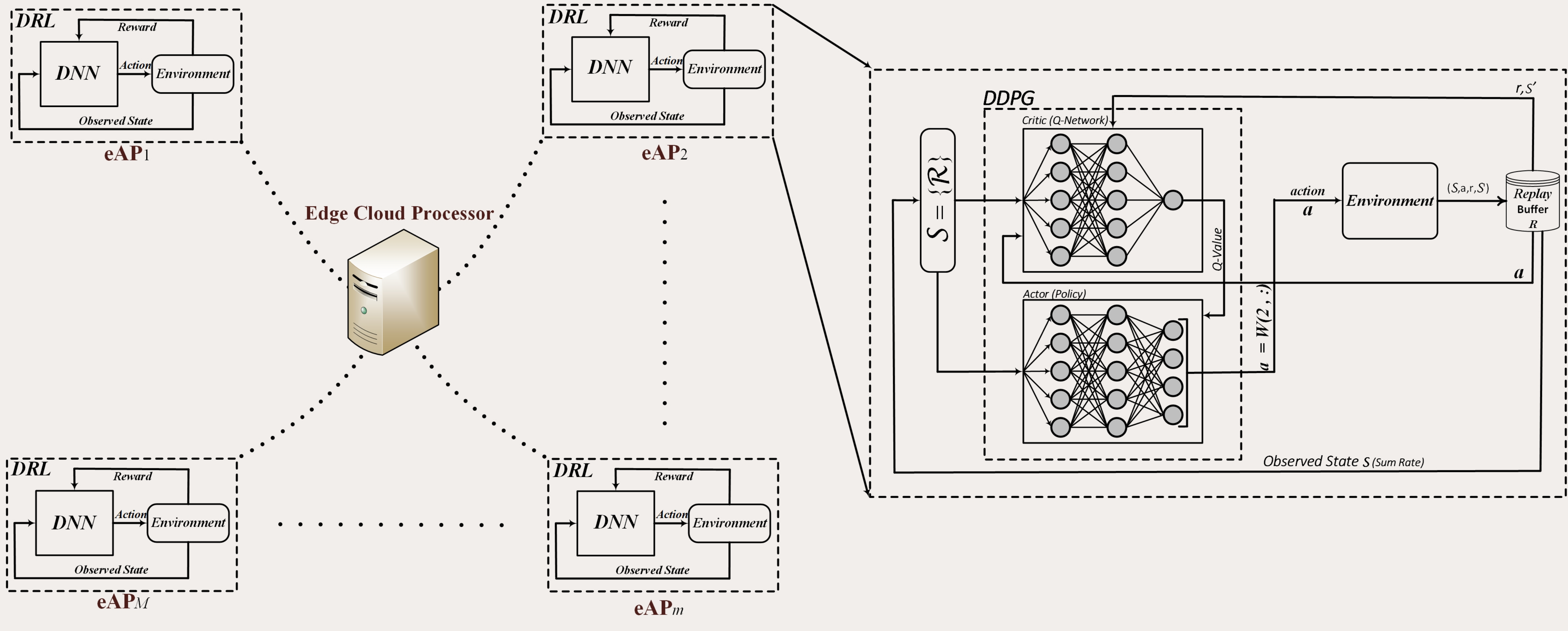}
\caption{Example of the proposed distributed learning-based DRL beamforming.}\label{DRL_Dist_Learning}
\end{figure*}

The design parameters of the DRL model are slightly changed compared to Table \ref{tab:ddpg_center}. The states $s_t$ and $r_t$ are the UE SINR $\{ \gamma_1,...,\gamma_K\}$ and sum-rate $\sum_{k=1}^K\log_2\left(1+\gamma_{k}  \right)$ (as before), but the actions $a_t$ are changed to $\bm{\omega}_m$, since the $m$-th eAP learns to optimize the $m$-th row of the beamforming matrix $\mathbf{W}$. Beamforming is then optimized as follows: The first step is the DDPG process that includes: (i) generating experiences, and (ii) updating the network parameters using Bellman equation (\ref{bellman_equation}), and the policy gradient in (\ref{policy_gradient}). Each agent learns to optimize one row of the beamforming matrix. Here, the ECP, which is connected to all the eAPs through backhaul links, acts as the  \textit{coordinator} to facilitate sharing of the optimized rows with other agents. At each episode, the eAPs discover multiple actions. They select the action with the highest reward in terms of the maximum normalized sum-rate during the episode and send it to the coordinator. The coordinator receives all the rows from the eAPs, and then concatenates them to create a new beamforming matrix. Finally, it broadcasts the new matrix to all eAPs. The number of updates of the beamforming matrix, i.e. the horizon, can be selected based on the required accuracy. \textbf{Algorithm \ref{algo:ddpg_dist}} summarizes the learning process in each eAP, and \textbf{Algorithm \ref{algo:coor_dist}} summarizes the coordination process among the eAPs by the the ECP.

\begin{algorithm}
\caption{Local DDPG (implemented in each eAP)}
\label{algo:ddpg_dist}
\begin{algorithmic}[1]
\State Initiate the cell-free network environment with random beamforming matrix $\bm{W}_{\textrm{local}}$.
\State Initialize actor network $\mu(\cdot)$ and critic network $Q(\cdot)$ with parameters $\theta^\mu$ and $\theta^Q$, respectively.
\State Initiate the random process $\mathcal{N}$.
\State Initiate $r_{\textrm{max}} = 0$.
\For{each episode}
\State Observe initial state $s$ from the cell-free network environment.
\For{each step}
\State Apply $a = \mu(s) + \mathcal{N}_t$ on the cell-free network environment and observe new state $s'$ and reward $r$. 
\State Store $(s,a,r,s')$ in the replay buffer $\mathcal{R}$.
\State Update $\theta^\mu$ and $\theta^Q$ weights with a batch from $\mathcal{R}$ using Bellman equation (\ref{bellman_equation}) and policy gradient (\ref{policy_gradient}).
\If{$r \geq r_{\textrm{max}}$}
\State $a_{\textrm{optimal}} = a$.
\State $r_{\textrm{max}} = r$.
\EndIf
\EndFor
\If{update time of beamforming matrix}
    \State Send $a_{\textrm{optimal}}$ to the coordinator.
    \State Receive $\bm{W}_{\textrm{local}}$ from the coordinator.
    \State $\bm{W}_{\textrm{local}} = \bm{W}_{\textrm{new}}$.
\EndIf
\EndFor
\end{algorithmic}
\end{algorithm}
\begin{algorithm}
\begin{algorithmic}[1]
\caption{Coordinator (implemented in the ECP)}
\label{algo:coor_dist}
\Repeat
\State $\bm{W}_{\textrm{new}}=0$
\Repeat
\If{$a_{\textrm{optimal}}$ is received from $i$-th eAP}
\State $\bm{W}_{i, \,\textrm{new}}= a_{\textrm{optimal}}$.
\EndIf
\Until{$a_{\textrm{optimal}}$ is received from every eAP.}
\State Broadcast $\bm{W}_{\textrm{new}}$ to all agents.
\Until{all agents stop training.}
\end{algorithmic}
\end{algorithm}

\section{Complexity Analysis and Communication Overhead}
\label{sec:Complexity}
As we have mentioned before, the beamforming optimization problem (\ref{OptimizationProb}) is non-convex. Conventional approaches to solve a non-convex problem include iterative algorithms such as exhaustive search, steepest descent, gradient ascent, and interior-point methods\footnote{Note that the objective function in (\ref{OptimizationProb}) is twice differentiable.}. Considering an exhaustive search, we quantize the beamforming vector of $k$-UE, i.e. $\bm{\omega}_k=\left[w_{1k} \dots w_{Mk} \right]$, by a certain step size $\Delta$. For the beamforming optimization problem, this results in a complexity of $O\left(\left(\frac{1}{\Delta}\right)^{M+K}\right)$. The combinatorial complexity renders the conventional beamforming techniques impractical, in particular, for dense cell-free networks. This calls for novel methods such as DRL-based solutions that retain low complexity while guaranteeing efficiency.

To evaluate the time-complexity of a DNN, the conventional measure is the \textit{floating-point operations per second} (FLOPs). For any fully connected layer $L_i$ of input size $I_i$ and output size $O_i$, the number of FLOPs is given by
\begin{equation}
\text{FLOPS}(L_i) = 2 \,I_i\,O_i.
\end{equation}
Thus the total number of FLOPS of the DRL-based method during the inference for a policy with $L$ hidden layers yields:
\begin{equation}
\begin{aligned}
\text{FLOPs}_{\text{DRL}}&=\sum_{i=1}^{3}\, \text{FLOPs}(L_i) \\
&= 2 \cdot \Big( H_1 \cdot\mathcal{|S|} + H_L \cdot \mathcal{|A|} + \sum_{i=2}^L H_{i-1} \times H_i \Big)\\
&=2 \cdot \Big( H_1 \times K + H_L \left(K\times M\right) + \sum_{i=2}^L H_{i-1} \times H_i \Big),    
\end{aligned}
\end{equation}
where $H_1$, $H_L$, and $H_i$ denote the size of the first hidden layer, $L$-th hidden layer, and $i$-th hidden layer, respectively.

Table \ref{Comparision} compares the order of complexity of inference, as well as the convergence rate, of the three proposed DRL-based methods. Note that the policy implemented in this paper has 2 hidden layers of sizes 258 and 128.
\begin{table}[h!]\scriptsize
\centering
\caption{Complexity analysis}
\begin{tabular}{|c||c|c|}
\hline 
\textbf{Model} & \textbf{FLOPS for inference} & \textbf{Convergence} \\
\hline
\hline
\shortstack{MMSE Solution\\ \textcolor{white}{.}} 
 & \shortstack{$\textit{O}\left(\left[M\times K\right]^2\right)$\\ \textcolor{white}{ }}
& \shortstack{Polynomial\\ Convergence} \\
\hline
Centralized (DDPG) & $2\cdot\Big(256 \times 128$ + 256$\cdot{K}$ + 128$\cdot{(M\times K)\Big)}$  & Quadratic \\
\hline
\shortstack{Distributed Experience\\ (D4PG)}  & \shortstack{$2\cdot\Big(256 \times 128$ + 256$\cdot K$ + 128$\cdot(M\times K)\Big)$\\ \textcolor{white}{.}} & \shortstack{Quadratic\\ \textcolor{white}{.}} \\
\hline
Distributed Learning  & $2\cdot\Big(256 \times 128$ + 256$\cdot K$ + 128$\cdot K\Big)$ & Quadratic \\
\hline
\end{tabular}
\label{Comparision}
\end{table}
The number of FLOPS during the inference is mainly determined by the matrix multiplications of the policy network, which has four layers with size $\mathcal{|S|}$, 256, 128, and $\mathcal{|A|}$. The number of FLOPs for a connected layer is twice the product of the size of the hidden layers. That is, given a hidden layer of size 128 that has a previous layer of size 256, the weight matrix should be of size $128 \times 256$ and the FLOPs is $2 \times ( 256 \times 128)$. For D4PG, which uses distributed training, the number of FLOPS is equal to that of DDPG since the feed-forward operation on the policy network remains identical during the inference. In contrast, for the fully distributed approach, each of the $M$ eAPs predicts one row of the beamforming matrix. Therefore, the number of FLOPS is smaller than that of D4PG with distributed experience and DDPG. In the next section, we will investigate the performance of DRL models in terms of transmission sum-rate and the convergence time.

Also, note that, for the D4PG method, efficient communication required for sharing the experiences, generated locally at the eAPs, with the ECP can be challenging. Moreover, in case of heterogeneous processing capabilities at the eAPs, a synchronization method will be required at the ECP for reception of the experiences from the eAPs. This method should avoid any bias toward experiences coming from the actors that  contribute more to the replay buffer.
In a cell-free network, optimizing the uplink beamforming necessitates some level of control signaling among the distributed entities (eAPs and UEs). In this regard, each of our proposed beamforming solutions imposes a different requirement. Considering CSI, all of the solutions require the same amount of signaling overhead, which is due to the collection of the received pilot signals at each eAP. This is followed by application of the MMSE algorithm and estimation of the CSI among all UEs and the intended eAP. To find the beamforming matrices, the fully distributed solution requires the highest amount of signaling exchanges to transfer the beamforming vectors among different eAPs. This is obviously the cost paid to enable a distributed learning-based solution.  

\section{Numerical Analysis}
\label{sec:Numerical}
\subsection{Simulation Parameters}
%
We numerically evaluate the proposed beamforming methods in terms of transmission performance (sum-rate) and convergence rate. \textbf{Table \ref{Simulation_Parameters}} summarizes the most important parameters of the simulation setting.
\begin{table}[h!]
\centering
 \caption{Simulation parameters}
\begin{tabular}{|c|c|}
\hline 
\textbf{Parameter} & \textbf{Value}  \\
\hline \hline
AWGN PSD per UE &
$-169$ dBm/Hz  \\
\hline
Path-loss exponent, $\kappa$ & $2$\\
\hline
Nakagami fading parameters, $\left(\mathcal{M}, \Omega\right)$  & $\left(1, 1 \right)$\\
\hline
Training sequence length, $\tau_p$ & $K$ Samples\\
\hline
Pilot transmission power, $\rho_k$ & $100$ mW, ${\forall~k}$\\
\hline
SIC sensitivity, $P_s$  & $1$ dBm\\
\hline
\end{tabular}
\label{Simulation_Parameters}
\end{table}

For simplicity, we assume the following: (i) For every AP $m$ and every UE $k$, $h_{mk}$ is a random variable with $\mathcal{M}_{mk}=\mathcal{M}=1$ and $\Omega_{mk} = \Omega = 1$; (ii) For every AP $m$, the AWGN has PSD $\sigma_{m}/2=-169$ dBm/Hz; (iii) Concerning large-scale fading, all eAPs and UEs are uniformly distributed over a disc of radius $r=18$ meters, implying a coverage area of $1017.88 \text{m}^2$. The ECP knows the large-scale fading of each UE. Furthermore, to evaluate different DRL models, we assume that each iteration of the DRL corresponds to one coherent block (one CSI realization). Note that the DRL models can also be trained under varying CSI (at each channel use); however, this requires a larger number of iterations. We avoid such a training procedure to simplify the simulation.

We train the proposed models by using Python and TensorFlow 2.1.0 for $1 0$ episodes with $1000$ steps per one episode. The actor- and critic networks have fully connected layers with two hidden layers of 256 and 128 neurons followed by the  \textit{Relu} activation function in each. The dimension of the final layer for the actor network and its corresponding target network is defined in the cell-free network environment, depending on the approach followed. In the first two approaches, the dimension of the output layer is equal to the number of elements in the beamforming matrix. In the third approach, it is equal to the number of elements of one row.

The constraint C1 in the problem formulation P1 is implemented by affecting a negative reward to the actions. The constraint C2 is imposed by defining a softmax activation function in the output layer which takes as input a vector of real values and normalizes it into a vector with values between 0 and 1. The hyperparameters of the DRL-model are as follows: discount factor $\zeta = 0.99$, learning rate $\nu = 0.001$ for both actor and critic networks, a Poylak averaging parameter $\tau = 0.005$, and size of experience replay buffer $|\mathcal{R}| = 10^6$. We use {\rm Adam} for the critic and actor optimizer. In the D4PG approach, we use $N$-step returns $N=5$ and $51$ atoms in the distributional representation with $V_{min}=-20$ and $V_{max}=100$ as defined in \cite{bellemare2017distributional}. We use the binary cross-entropy as the metric of the distance between distributions.

To verify the effectiveness of the proposed scheme, we evaluate the performance in terms of instantaneous reward defined by (\ref{OptimizationProb}). We first train the proposed methods under three possible network models, namely, (i) small-scale cell-free network with $M=15$ and $K=5$; (ii) medium-scale cell-free network with $M=50$, $K=15$; (iii) large-scale cell-free network with $M=70$ and $K=20$. 
\subsection{Results}
{Fig. \ref{Smale_Scale}} shows the normalized transmission sum-rate versus the overall number of training steps in the small-scale setting. {In this figure, we compare the performances of the proposed beamforming schemes with those for the two most common techniques used in the literature namely, MMSE and conjugate beamforming. Note that conjugate beamforming is considered as the simplest beamforming scheme, where by knowing the CSI, each AP multiplies the received signal by the intended UE's CSI vector without considering interference from other UEs. It can be noticed from this figure that the MMSE method gives the best performance while the conjugate beamforming method results in the worst performance.} Among the proposed DRL-based beamforming schemes,  the scheme with  distributed learning and centralized training (i.e. D4PG) exhibits the best performance. This is due to the fact that, the D4PG actors distributed among different eAPs generate more independent experiences all of which can be exploited by the critic network at the ECP.
Moreover, the performance of the fully distributed DRL-based beamforming is better than that of the fully centralized scheme (DDPG). Indeed, in a small network, since  the beamforming vector has low dimensions, the performance of the DRL method at each eAP is almost identical to those of the conventional optimization methods (such as the steepest ascent-based iterative algorithms). Furthermore, the D4PG achieves the closest performance to the MMSE beamforming solution (without learning) with about $80\%$ of the performance of the MMSE method after 6000 learning steps. {Note that the performance degradation in the DRL-based systems is the cost we pay while we achieve a significant reduction in computational complexity compared to that of  centralized MMSE method. Furthermore, even though the conjugate beamforming method can be implemented in a fully distributed manner with low processing requirements, it suffers from significant performance degradation compared to all other methods (as shown in Fig. \ref{Smale_Scale}). }


\begin{figure}[htb]
\minipage{0.47\textwidth}
  \includegraphics[width=\textwidth]{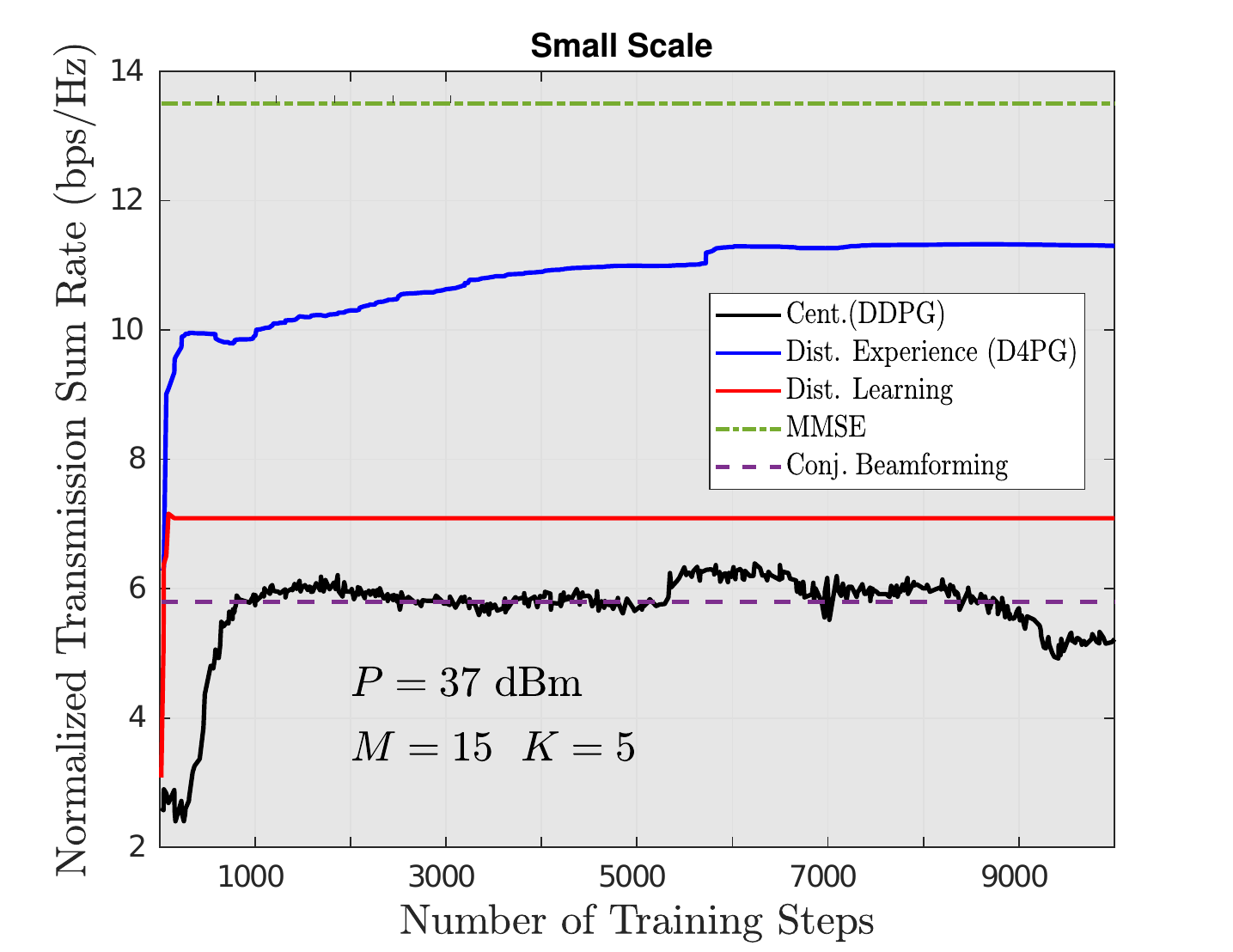}
  \caption{Performance of different models under small-scale scenario.} 
  \label{Smale_Scale}
\endminipage\hfill
\minipage{0.5\textwidth}
\includegraphics[width=\textwidth]{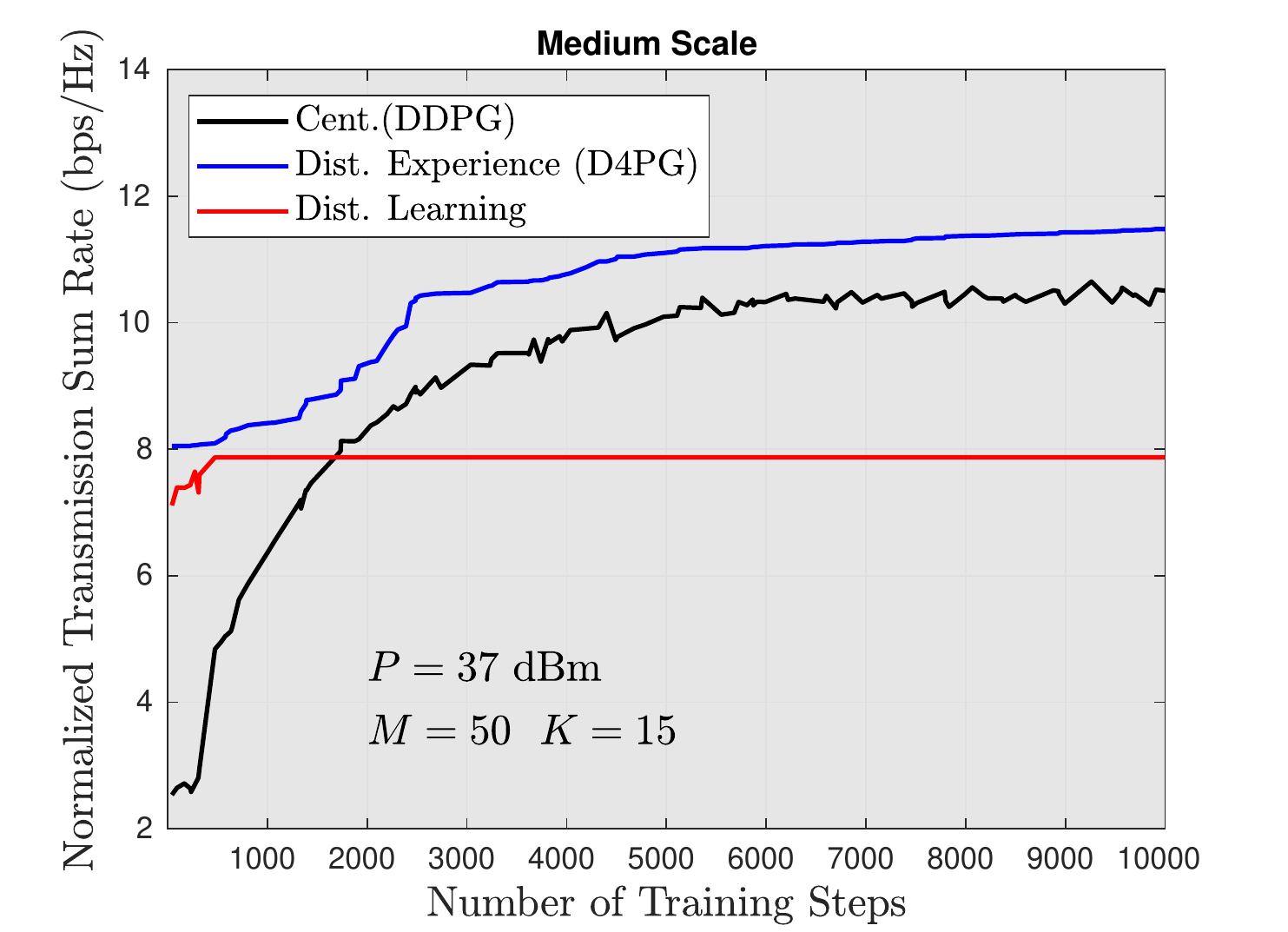}
\caption{Performance of different models under medium-scale scenario.}
\label{Medium_Scale}
\endminipage\hfill
\end{figure}


{Fig. \ref{Medium_Scale}} shows the results for a medium-scale cell-free network. The centralized DRL-based beamforming with distributed experience (i.e. D4PG) retains its superiority over other methods, although the performance gap with the fully centralized DDPG is smaller in this case. Moreover, the fully distributed DRL-based beamforming is no longer superior to the centralized DDPG. The reason is that in the distributed setting, every eAP uses the beamforming vectors found by the other eAPs in previous iterations. This introduces inaccuracy in treating the inter-user interference from other eAPs.

In {Fig. \ref{Large_Scale}}, we consider a larger-scale network setting. The gap between the distributed DRL-based beamforming and other methods with centralized learning (DDPG and D4PG) increases significantly. Additionally, the centralized DRL-based beamforming with distributed experience (D4PG) maintains better performance compared to the fully centralized DDPG method in the sense that it converges in fewer steps. Nevertheless, after a relatively large number of training steps, the DDPG algorithm is observed to perform slightly better than the D4PG algorithm.

\begin{figure}[htb]
\begin{center}
\minipage{0.47\textwidth}
  \includegraphics[width=\textwidth]{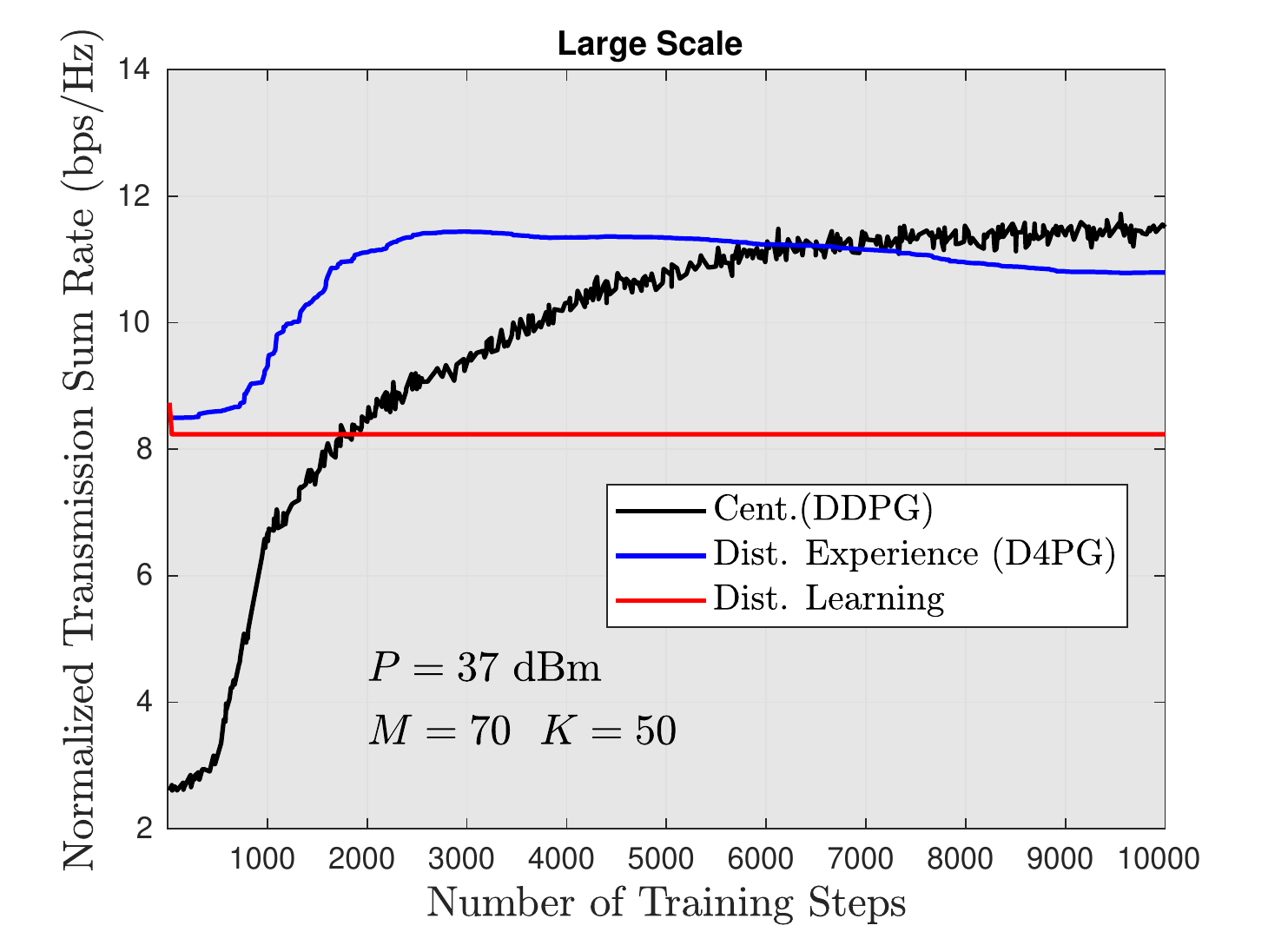}
  \caption{Performance of different models under large-scale scenario.}
  \label{Large_Scale}
\endminipage\hfill
\centering
\minipage{0.47\textwidth}
  \includegraphics[width=\textwidth]{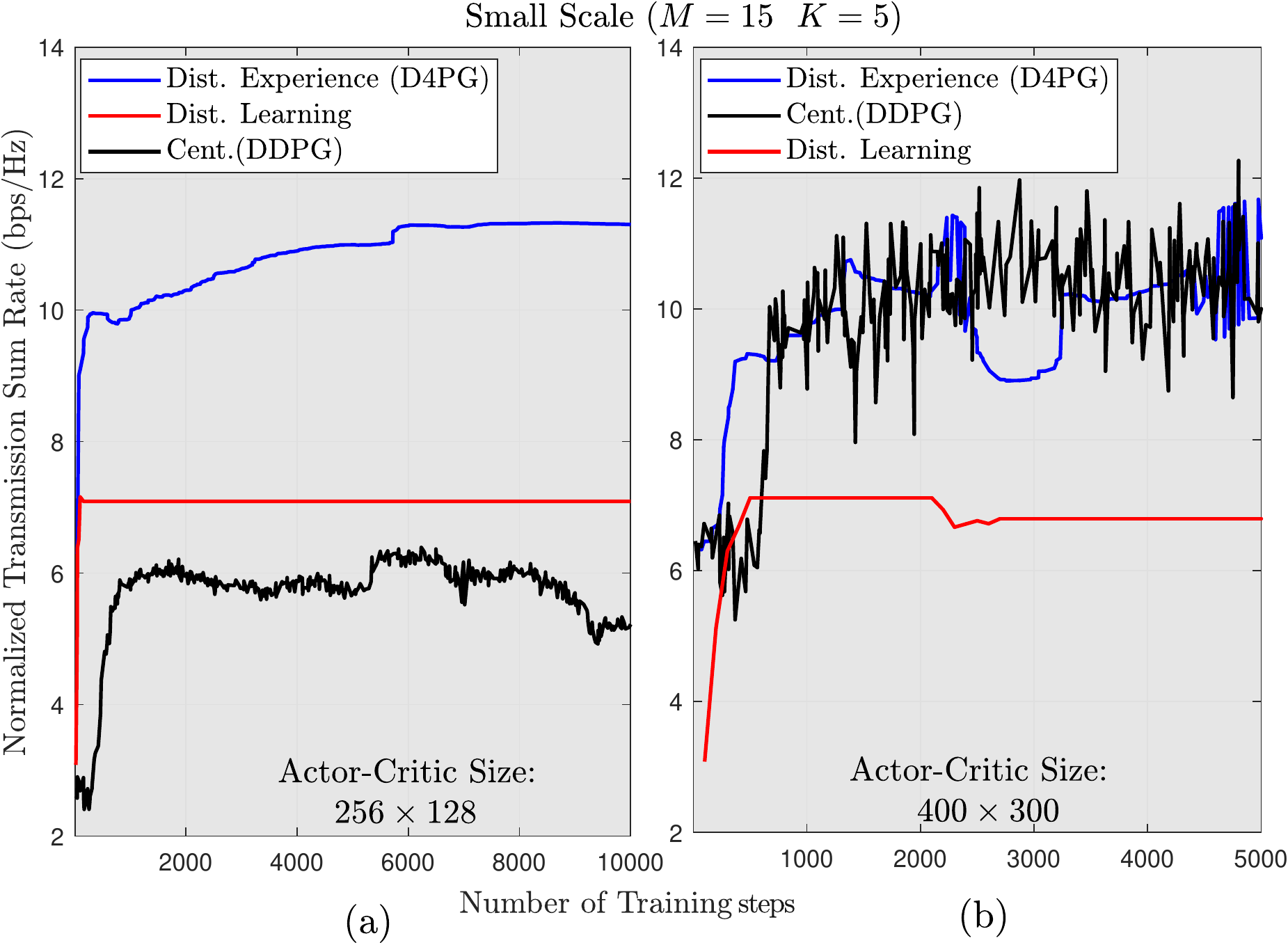}
\caption{Performance of DRL models under different actor-critic network sizes.}
\label{NN_Size}
\endminipage\hfill
\end{center}
\end{figure}


In {Fig. \ref{NN_Size}}, we evaluate the performance of the proposed DRL models with larger sizes of actor-critic networks. Specifically, we increase the size of both the actor and critic networks from $256\times128$ to $400\times 300$. With larger sizes of the neural networks, we observe   a smaller performance gap between the fully centralized method based on DDPG and the DRL-based beamforming with distributed experience (i.e. D4PG). The performance of the DRL-based beamforming with distributed learning is not affected by the increase in the size of the neural networks. The reason is that, in distributed learning, the number of the optimization variables per eAP (i.e. $K$ elements in a row of the beamforming matrix) is small so that the best possible performance is achievable even for relatively smaller actor-critic neural networks (e.g. $256 \times 128$).

Moreover, we compare the running times of the gradient ascent method and the proposed DRL models as a function of the problem dimension. For the DRL models, the running time is considered to be the time required to obtain the solution in the inference mode. We recall that the inference in deep learning is a feed-forward propagation for a trained neural network. In the simulations, we use the inference time of the policy network proposed in the DRL-based beamforming approach (DDPG) since its policy network architecture is identical to that of DRL-based beamforming with distributed experience (D4PG). Moreover, it is bigger than the policy network architecture of DRL-based beamforming with distributed learning (since the output layer of the centralized method is much bigger than that of the distributed approach). The learning rate of the gradient ascent algorithm to solve the optimization problem is $\alpha = 0.1$. The problem dimension is defined by the number of eAPs $M$ and the number of UEs $K$ in the network. Here we set $K = \frac{M}{3}$ and we vary $M$ in the range of $[15,\dots,150]$.

\begin{figure}[htb]
\minipage[t]{0.45\textwidth}
  \includegraphics[width=\textwidth]{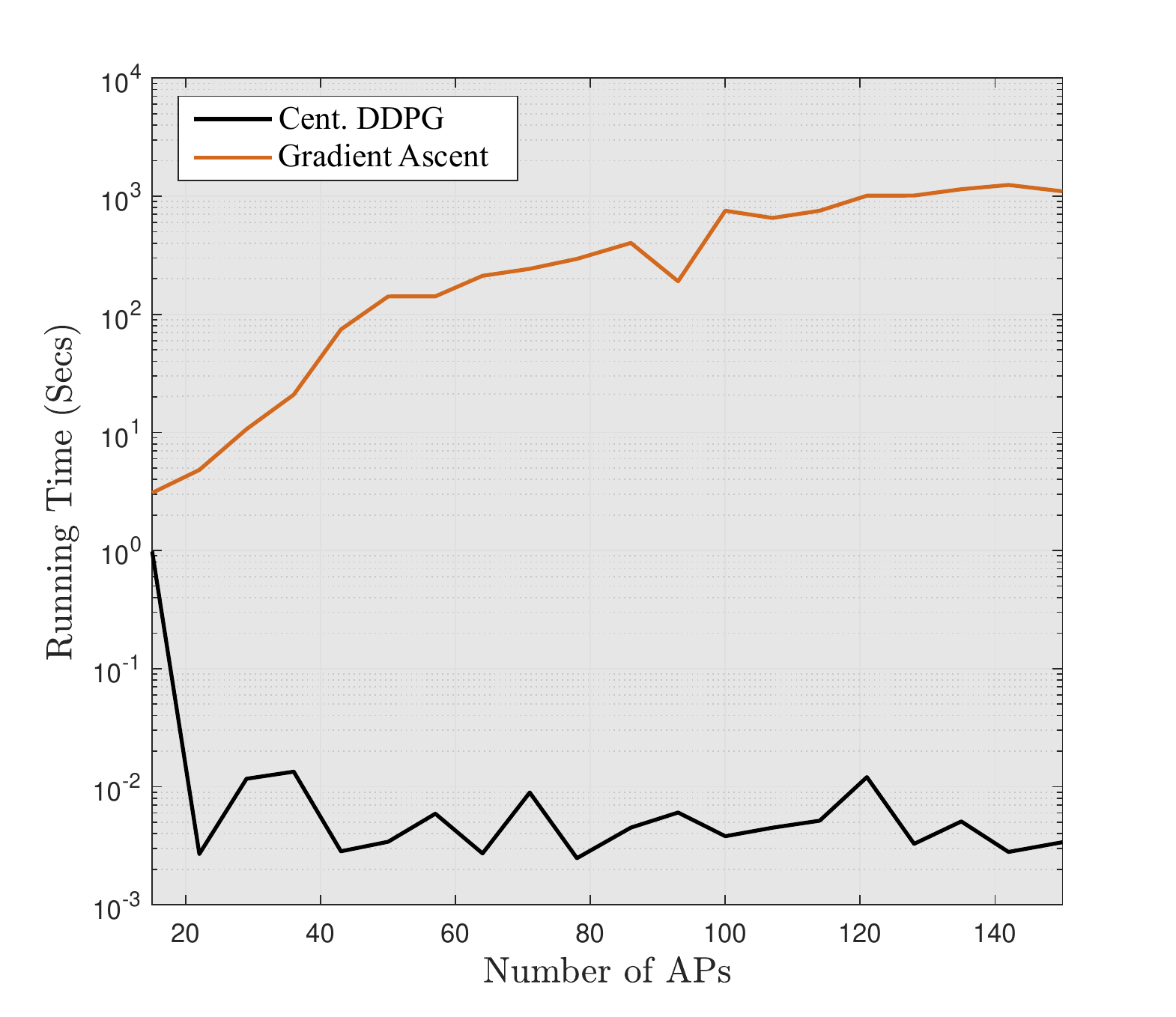}
\caption{Running time comparison.}
\label{fig:time}  
\endminipage\hfill
\minipage[t]{0.54\textwidth}
\includegraphics[width=\textwidth]{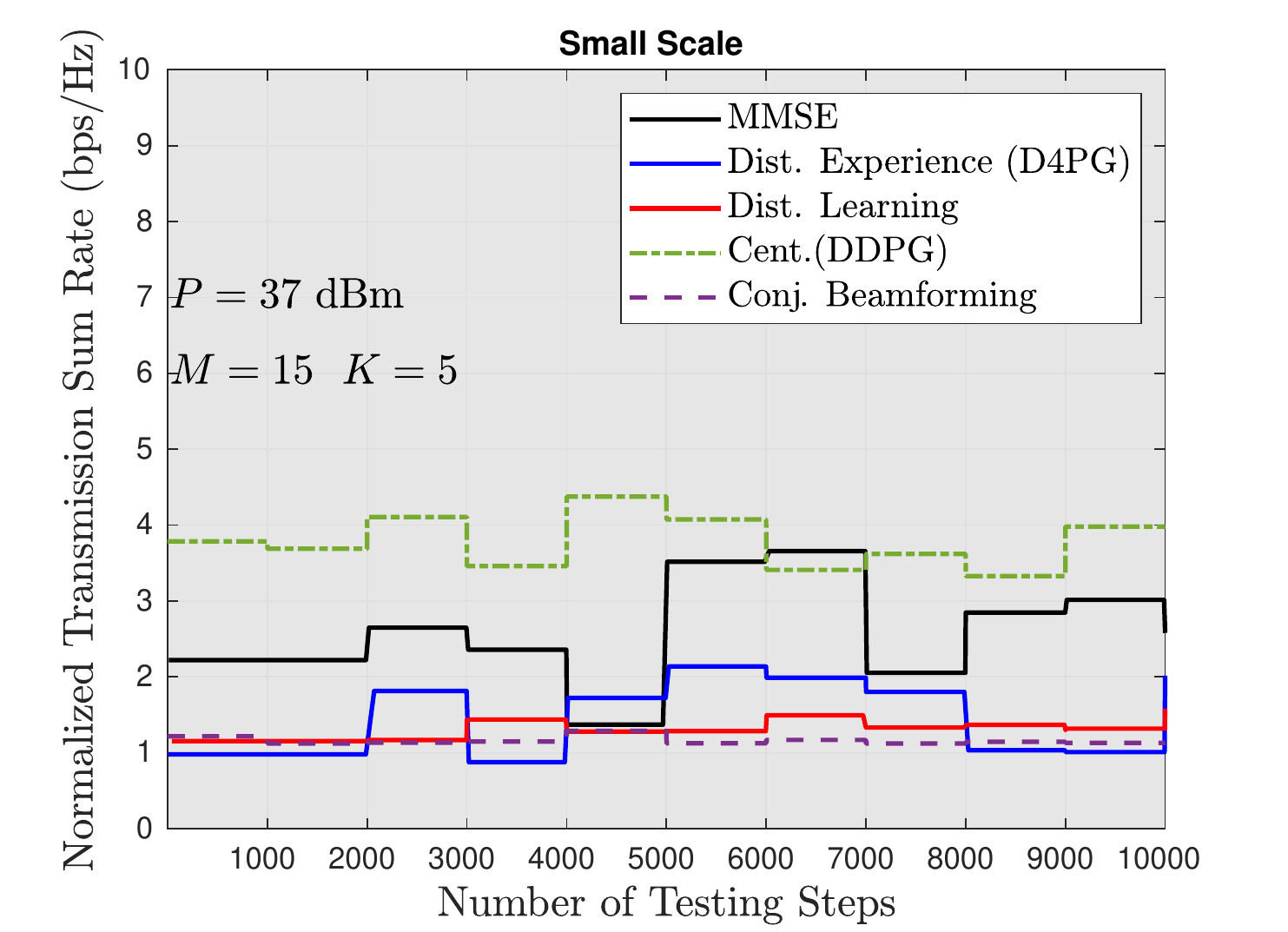}
\caption{Performance of DRL models in a testing environment.}
\label{fig:test}  
\endminipage\hfill
\end{figure}


In Fig. 8, we benchmark the gradient ascent method with the centralized DRL method in the inference mode. Note that, based on the calculation of the number of FLOPs in the inference mode, as given in Table IV, the centralized DDPG has the same complexity as the D4PG model and has higher complexity than the distributed DDPG model. That is, among the three DRL models, the centralized model involves  the most computational operations. Also, we compare the  convergence times (i.e. time needed to reach a stable solution) of the algorithms for different network size.

{Fig. \ref{fig:time}} shows that the DRL-models converge to a stable solution faster than the gradient ascent algorithm. The solution time of the gradient ascent algorithm grows exponentially with the dimension of the network, whereas the DRL models require less than a second for finding the optimal beamforming matrix. 

{Finally, we consider the case of various coherent blocks (different CSI realizations) and we study the performance of the DRL-based methods. We setup a cell-free network environment where each episode composed of 1000 iterations uses one CSI realization. We train the DRL-models on 50 episodes, 50000 iterations in totals. For benchmarking, we test the trained policies in a test environment for 10 episodes, 10 different CSI realizations, and we compare them with the centralized MMSE method and the conjugate beamforming method. Fig \ref{fig:test} shows that the MMSE method retains its superiority over the DRL-based methods, while the DRL methods outperform the conjugate beamforming method. Since the training can be done {\em offline}, the DRL methods are suitable for practical implementation due to their low inference time. This time is almost negligible compared to the convergence time of MMSE and conjugate beamforming methods, especially for huge number of eAPs and UEs.
A DRL-based method with centralized learning gives higher performance compared to the DRL-based method with distributed learning. For the latter, we split the computational task, which is costly for large cell-free networks, among the eAPs. Therefore, we have the classical trade-off between performance and computational requirements for convergence where the centralized approaches have higher performance but at the cost of  high computation at the ECP.}

\section{Conclusion}
\label{sec:Conclusion}
We have studied the beamforming optimization problem in cell-free networks. First, we have considered a fully centralized network and designed a DRL-based beamforming method based on the DDPG algorithm with continuous optimization space. We have also enhanced this method in D4PG by collecting distributed experiences from geographically-distributed eAPs. Afterward, we have developed a DRL-based beamforming design with distributed learning, which divides the beamforming optimization tasks among the eAPs. Even though the D4PG beamforming technique demonstrates a promising performance, it still conducts the learning process at the ECP. A future research direction could be to develop DRL models for channel estimation  and pilot assignment for cell-free networks, and also investigate the robustness of the DRL solutions in presence of estimation errors as well as errors in reward signals.


\appendices
\numberwithin{equation}{section}
\section*{Appendix}
\renewcommand{\theequation}{A\thesection.\arabic{equation}}
\setcounter{equation}{0}
Let $\tau_p\leq \tau_c$, where $\tau_c$ is the coherence time of the channel via which the sequence is sent to all the eAPs with constant power. The received pilot vector at the $m$-th eAP yields
\begin{equation}
\bm{y}_{\textit{p}, m} = \sum_{k=1}^{K}\sqrt{\tau_p \rho_k}g_{ mk}\bm{\varphi}_{k} + \bm{\eta}_{m},\label{Training_1}
\end{equation}
where $\rho_k$ is the normalized transmission power for each symbol of the $k$-th UE pilot vector. Moreover, $\bm{\eta}_{m}\in \mathbb{C}^{\tau_p\times 1}$ is the zero-mean complex additive white Gaussian noise (AWGN) vector related to pilot symbols with independent and identically distributed (i.i.d) rvs, i.e. ${\eta}_{m}\thicksim \mathcal{CN}\left(0, 1/2\right)$.
To find the best estimate of $g_{mk}$ (denoted by $\hat{g}_{mk}=\mathcal{F}_{mk}^{1/2}\hat{h}_{mk}$) given the vector of observations $\bm{y}_{\textit{p}, m}$, we first project $\bm{y}_{p,m}$ over $\bm{\varphi}^H_k$. Therefore,
\begin{dmath}
\dot{y}_{p,m}=\bm{\varphi}^H_{k}\bm{y}_{\textit{p}, m}
=\underbrace{\sqrt{\tau_p \rho_k}g_{ mk}}_{\text{Desired Value}} +\underbrace{\sum_{l=1, l\neq k}^{K}\sqrt{\tau_p \rho_{l}}g_{ ml}\bm{\varphi}_{k}^H\bm{\varphi}_{l}+\bm{\varphi}_{k}^H\bm{\eta}_{m}}_{\text{Estimation Error}}.\label{Training_2}
\end{dmath}
Here $g_{mk}$ can be estimated from (\ref{Training_2}) by using the maximum {\it a posteriori} decision rule (MAP), which is identical to the minimum mean square method (MMSE) \cite{Statistical_Signal_Processing, MMSE_1}. Furthermore, given that the pilot signals are partially orthogonal and partially non-orthogonal, $\bm{\varphi}_k^H\bm{y}_{p,m}$ in (\ref{Training_2}) represents a sufficient statistics for the optimal estimation of $g_{mk}$ (MMSE). Thus the best estimate of $g_{mk}$ is given by \cite{Cell_Less_2}

\begin{equation}
\hat{g}_{mk} = \frac{\mathbb{E}\left[\dot{y}_{p,m}^*g_{mk}\right]}{\mathbb{E}\left[|\dot{y}_{p,m}|^2\right]}\dot{y}_{p,m} = \mathcal{E}_{mk}\dot{y}_{p,m}.\label{Estimate_Formula}
\end{equation}

Under the assumption that for all $~m$ and $k$, $g_{mk}$s are proper independent but non-identically distributed (i.n.d) complex Gaussian rvs, and that $\bm{\eta}_{m}$s are zero-mean i.i.d random variables, we get $\mathcal{E}_{mk}$ as shown in \textbf{Lemma \ref{Lemma_1}}.

\bibliographystyle{IEEEtran}
\bibliography{IEEEabrv,yasser2.bib}

\begin{thebibliography}{10}
\providecommand{\url}[1]{#1}
\csname url@samestyle\endcsname
\providecommand{\newblock}{\relax}
\providecommand{\bibinfo}[2]{#2}
\providecommand{\BIBentrySTDinterwordspacing}{\spaceskip=0pt\relax}
\providecommand{\BIBentryALTinterwordstretchfactor}{4}
\providecommand{\BIBentryALTinterwordspacing}{\spaceskip=\fontdimen2\font plus
\BIBentryALTinterwordstretchfactor\fontdimen3\font minus
  \fontdimen4\font\relax}
\providecommand{\BIBforeignlanguage}[2]{{%
\expandafter\ifx\csname l@#1\endcsname\relax
\typeout{** WARNING: IEEEtran.bst: No hyphenation pattern has been}%
\typeout{** loaded for the language `#1'. Using the pattern for}%
\typeout{** the default language instead.}%
\else
\language=\csname l@#1\endcsname
\fi
#2}}
\providecommand{\BIBdecl}{\relax}
\BIBdecl

\bibitem{Massive_MTC_2}
P.~{Popovski}, K.~F. {Trillingsgaard}, O.~{Simeone}, and G.~{Durisi}, ``{5G}
  wireless network slicing for {eMBB}, {URLLC}, and {mMTC}: A
  communication-theoretic view,'' \emph{IEEE Access}, vol.~6, pp.
  55\,765--55\,779, 2018.

\bibitem{Cell_Less_1}
{S. Zhou}, {M. Zhao}, {X. Xu}, {J. Wang}, and {Y. Yao}, ``Distributed wireless
  communication system: A new architecture for future public wireless access,''
  \emph{IEEE Mag. Commun.}, vol.~41, no.~3, pp. 108--113, March 2003.

\bibitem{Cell_Less_3}
E.~{Björnson} and L.~{Sanguinetti}, ``Making cell-free massive {MIMO}
  competitive with {MMSE} processing and centralized implementation,''
  \emph{{IEEE} Trans. on Wireless Commun.}, vol.~19, no.~1, pp. 77--90, 2020.

\bibitem{Cell_Less_4}
M.~{Bashar}, K.~{Cumanan}, A.~G. {Burr}, M.~{Debbah}, and H.~Q. {Ngo}, ``On the
  uplink max-min {SINR} of cell-free massive {MIMO} systems,'' \emph{{IEEE}
  Trans. on Wireless Commun.}, vol.~18, no.~4, pp. 2021--2036, April 2019.

\bibitem{DOMA}
Y.~{Al-Eryani} and E.~{Hossain}, ``The {D-OMA} method for massive multiple
  access in 6{G}: Performance, security, and challenges,'' \emph{IEEE Veh.
  Technol. Mag.}, vol.~14, no.~3, pp. 92--99, Sep. 2019.

\bibitem{Cell_Less_2}
H.~Q. {Ngo}, A.~{Ashikhmin}, H.~{Yang}, E.~G. {Larsson}, and T.~L. {Marzetta},
  ``Cell-free massive {MIMO} versus small cells,'' \emph{{IEEE} Trans. on
  Wireless Commun.}, vol.~16, no.~3, pp. 1834--1850, March 2017.

\bibitem{Cell_Less_Beamforming_1}
G.~Interdonato, E.~Bj{\"o}rnson, H.~Quoc~Ngo, P.~Frenger, and E.~G. Larsson,
  ``Ubiquitous cell-free massive {MIMO} communications,'' \emph{EURASIP J. on
  Wireless Commun. and Net.}, vol. 2019, no.~1, pp. 197--210, Aug. 2019.

\bibitem{Cell_Less_Beamforming_2}
M.~{Attarifar}, A.~{Abbasfar}, and A.~{Lozano}, ``Modified conjugate
  beamforming for cell-free massive {MIMO},'' \emph{{IEEE} Trans. on Wireless
  Commun.}, vol.~8, no.~2, pp. 616--619, April 2019.

\bibitem{Cell_Free_Complexity_1}
E.~{Björnson} and L.~{Sanguinetti}, ``Scalable cell-free massive {MIMO}
  systems,'' \emph{{IEEE} Trans. on Commun.}, vol.~68, no.~7, pp. 4247--4261,
  2020.

\bibitem{Cell_Free_Complexity_2}
A.~{Liu} and V.~K.~N. {Lau}, ``Joint {BS}-user association, power allocation,
  and user-side interference cancellation in cell-free heterogeneous
  networks,'' \emph{{IEEE} Trans. Signal Processing}, vol.~65, no.~2, pp.
  335--345, 2017.

\bibitem{DRL_MEC_1}
Y.~{Jin}, J.~{Zhang}, S.~{Jin}, and B.~{Ai}, ``Channel estimation for cell-free
  mm{W}ave massive {MIMO} through deep learning,'' \emph{{IEEE} Trans. Veh.
  Technol.}, vol.~68, no.~10, pp. 10\,325--10\,329, 2019.

\bibitem{DRL_MEC_2}
S.~J. {Nawaz}, S.~K. {Sharma}, S.~{Wyne}, M.~N. {Patwary}, and
  M.~{Asaduzzaman}, ``Quantum machine learning for 6{G} communication networks:
  State-of-the-art and vision for the future,'' \emph{IEEE Access}, vol.~7, pp.
  46\,317--46\,350, 2019.

\bibitem{DRL_MEC_3}
S.~{Chakraborty}, E.~{Björnson}, and L.~{Sanguinetti}, ``Centralized and
  distributed power allocation for max-min fairness in cell-free massive
  {MIMO},'' in \emph{2019 53'rd Asilomar Conf. on Signals, Systems, and
  Computers}, 2019, pp. 576--580.

\bibitem{DRL_MEC_4}
R.~{Nikbakht}, A.~{Jonsson}, and A.~{Lozano}, ``Unsupervised-learning power
  control for cell-free wireless systems,'' in \emph{IEEE 30'th Annual Int'l
  Symposium on Pers., Indoor and Mobile Radio Commun. ({PIMRC}'19)}, 2019, pp.
  1--5.

\bibitem{simeone18}
O.~{Simeone}, ``A very brief introduction to machine learning with applications
  to communication systems,'' \emph{IEEE Trans. on Cognitive Commun. and
  Networking}, vol.~4, no.~4, pp. 648--664, 2018.

\bibitem{Cell_Free_CSI_Est_2}
A.~{Abdallah} and M.~M. {Mansour}, ``Angle-based multipath estimation and
  beamforming for {FDD} cell-free massive {MIMO},'' in \emph{IEEE 20'th Inter'l
  Workshop on Signal Proc. Advances in Wireless Commun. ({SPAWC}'19)}, 2019,
  pp. 1--5.

\bibitem{Fronthaul_1}
G.~{Femenias} and F.~{Riera-Palou}, ``Cell-free millimeter-wave massive {MIMO}
  systems with limited fronthaul capacity,'' \emph{IEEE Access}, vol.~7, pp.
  44\,596--44\,612, 2019.

\bibitem{GCoMP}
Y.~{Al-Eryani}, E.~{Hossain}, and D.~I. {Kim}, ``Generalized coordinated
  multipoint ({GCoMP})-enabled {NOMA}: Outage, capacity, and power
  allocation,'' \emph{{IEEE} Trans. on Commun.}, vol.~67, no.~11, pp.
  7923--7936, Nov. 2019.

\bibitem{Cell_Free_Yasser}
Y.~{Al-Eryani}, M.~{Akrout}, and E.~{Hossain}, ``Multiple access in cell-free
  networks: Outage performance, dynamic clustering, and deep reinforcement
  learning-based design,'' \emph{{IEEE} J. Select. Areas Commun.}, pp. 1--1,
  2020.

\bibitem{Beamforming_Algorithms_1}
E.~{Nayebi}, A.~{Ashikhmin}, T.~L. {Marzetta}, H.~{Yang}, and B.~D. {Rao},
  ``Precoding and power optimization in cell-free massive {MIMO} systems,''
  \emph{{IEEE} Trans. on Wireless Commun.}, vol.~16, no.~7, pp. 4445--4459,
  2017.

\bibitem{Optimal_Cell_Free}
A.~{Zhou}, J.~{Wu}, E.~G. {Larsson}, and P.~{Fan}, ``Max-min optimal
  beamforming for cell-free massive {MIMO},'' \emph{{IEEE} Commun. Lett.}, pp.
  1--1, 2020.

\bibitem{Angel_of_Arrival_Beamforming}
A.~{Abdallah} and M.~M. {Mansour}, ``Efficient angle-domain processing for
  {FDD}-based cell-free massive {MIMO} systems,'' \emph{{IEEE} Trans. on
  Commun.}, vol.~68, no.~4, pp. 2188--2203, 2020.

\bibitem{Conjugate_BF}
H.~{Yang} and T.~L. {Marzetta}, ``Energy efficiency of massive {MIMO}:
  Cell-free vs. cellular,'' in \emph{IEEE 87'th Veh. Tech. Conf. (VTC Spring)},
  2018, pp. 1--5.

\bibitem{ZF_VS_CB}
Y.~{Zhang}, H.~{Cao}, M.~{Zhou}, and L.~{Yang}, ``Cell-free massive mimo: Zero
  forcing and conjugate beamforming receivers,'' \emph{Journal of Commun. and
  Networks}, vol.~21, no.~6, pp. 529--538, 2019.

\bibitem{Schedualing_Cell_Free}
Z.~{Chen}, E.~{Björnson}, and E.~G. {Larsson}, ``Dynamic resource allocation
  in co-located and cell-free massive {MIMO},'' \emph{IEEE Trans. on Green
  Commun. and Net.}, vol.~4, no.~1, pp. 209--220, 2020.

\bibitem{Hybrid_Analog_Digital}
M.~{Alonzo}, S.~{Buzzi}, A.~{Zappone}, and C.~{D’Elia}, ``Energy-efficient
  power control in cell-free and user-centric massive {MIMO} at millimeter
  wave,'' \emph{IEEE Trans. on Green Commun. and Networking}, vol.~3, no.~3,
  pp. 651--663, 2019.

\bibitem{Local_Precoding_Cell_Free}
G.~{Interdonato}, M.~{Karlsson}, E.~{Björnson}, and E.~G. {Larsson}, ``Local
  partial zero-forcing precoding for cell-free massive {MIMO},'' \emph{{IEEE}
  Trans. on Wireless Commun.}, vol.~19, no.~7, pp. 4758--4774, 2020.

\bibitem{ZF_RZF}
D.~{Wang}, M.~{Wang}, P.~{Zhu}, J.~{Li}, J.~{Wang}, and X.~{You}, ``Performance
  of network-assisted full-duplex for cell-free massive {MIMO},'' \emph{IEEE
  Trans. on Commun.}, vol.~68, no.~3, pp. 1464--1478, 2020.

\bibitem{Cell_Free_Scalability_1}
G.~{Interdonato}, P.~{Frenger}, and E.~G. {Larsson}, ``Scalability aspects of
  cell-free massive {MIMO},'' in \emph{IEEE Int'l Conf. on Commun. (ICC'19)},
  2019, pp. 1--6.

\bibitem{Cell_free_Learning}
C.~{D'Andrea}, A.~{Zappone}, S.~{Buzzi}, and M.~{Debbah}, ``Uplink power
  control in cell-free massive {MIMO} via deep learning,'' in \emph{2019 IEEE
  8'th Int'l Workshop on Comput. Advances in Multi-Sensor Adaptive Proc.
  (CAMSAP'19)}, 2019, pp. 554--558.

\bibitem{DNN_Cell_Free}
M.~{Bashar}, A.~{Akbari}, K.~{Cumanan}, H.~Q. {Ngo}, A.~G. {Burr}, P.~{Xiao},
  M.~{Debbah}, and J.~{Kittler}, ``Exploiting deep learning in
  limited-fronthaul cell-free massive {MIMO} uplink,'' \emph{IEEE J. on Sel.
  Areas in Commun.}, pp. 1--1, 2020.

\bibitem{lillicrap2015continuous}
T.~P. Lillicrap, J.~J. Hunt, A.~Pritzel, N.~Heess, T.~Erez, Y.~Tassa,
  D.~Silver, and D.~Wierstra, ``Continuous control with deep reinforcement
  learning,'' \emph{arXiv preprint arXiv:1509.02971}, 2015.

\bibitem{barth2018distributed}
G.~Barth-Maron, M.~W. Hoffman, D.~Budden, W.~Dabney, D.~Horgan, D.~Tb,
  A.~Muldal, N.~Heess, and T.~Lillicrap, ``Distributed distributional
  deterministic policy gradients,'' \emph{arXiv preprint arXiv:1804.08617},
  2018.

\bibitem{CellLessShadow}
H.~Q. {Ngo}, A.~{Ashikhmin}, H.~{Yang}, E.~G. {Larsson}, and T.~L. {Marzetta},
  ``Cell-free massive {MIMO}: Uniformly great service for everyone,'' in
  \emph{2015 IEEE 16th Int'l Workshop on Signal Proc. Advances in Wireless
  Commun. ({SPAWC})}, 2015, pp. 201--205.

\bibitem{SINR_Expression}
W.~{Liao}, T.~{Chang}, W.~{Ma}, and C.~{Chi}, ``{QoS}-based transmit
  beamforming in the presence of eavesdroppers: An optimized
  artificial-noise-aided approach,'' \emph{{IEEE} Trans. Signal Processing},
  vol.~59, no.~3, pp. 1202--1216, 2011.

\bibitem{8491054}
S.~Buzzi, C.~D'Andrea, and C.~D'Elia, ``User-centric cell-free massive mimo
  with interference cancellation and local zf downlink precoding,'' in
  \emph{2018 15th International Symposium on Wireless Communication Systems
  (ISWCS)}, 2018, pp. 1--5.

\bibitem{van2016deep}
H.~Van~Hasselt, A.~Guez, and D.~Silver, ``Deep reinforcement learning with
  double {Q}-learning,'' in \emph{Thirtieth AAAI conf. on artificial
  intelligence}, 2016.

\bibitem{pmlr_v32_silver14}
D.~Silver, G.~Lever, N.~Heess, T.~Degris, D.~Wierstra, and M.~Riedmiller,
  ``Deterministic policy gradient algorithms,'' in \emph{Proceedings of the
  31st Int'l Conf. on Machine Learning}, ser. Proceedings of Machine Learning
  Research, E.~P. Xing and T.~Jebara, Eds., vol.~32, no.~1.\hskip 1em plus
  0.5em minus 0.4em\relax Bejing, China: PMLR, 22--24 Jun 2014, pp. 387--395.

\bibitem{d4pg}
\BIBentryALTinterwordspacing
R.~Barth-Maron, M.~W. Hoffman, D.~Budden, W.~Dabney, D.~Horgan, D.~TB,
  A.~Muldal, N.~Heess, and T.~Lillicrap, ``Distributed distributional
  deterministic policy gradients,'' \emph{Conference at ICLR}, 2018. [Online].
  Available: \url{https://arxiv.org/pdf/1804.08617.pdf}
\BIBentrySTDinterwordspacing

\bibitem{DDPG_Vs_D4PG}
Y.~Tassa, Y.~Doron, A.~Muldal, T.~Erez, Y.~Li, D.~de~Las~Casas, D.~Budden,
  A.~Abdolmaleki, J.~Merel, A.~Lefrancq, T.~Lillicrap, and M.~Riedmiller,
  ``Deepmind control suite,'' 2018.

\bibitem{bellemare2017distributional}
M.~G. Bellemare, W.~Dabney, and R.~Munos, ``A distributional perspective on
  reinforcement learning,'' \emph{arXiv preprint arXiv:1707.06887}, 2017.

\bibitem{Statistical_Signal_Processing}
R.~M. Gray and L.~D. Davisson, \emph{An Introduction to Statistical Signal
  Processing}, 1st~ed.\hskip 1em plus 0.5em minus 0.4em\relax New York, NY,
  USA: Cambridge University Press, 2010.

\bibitem{MMSE_1}
H.~{Yin}, D.~{Gesbert}, M.~{Filippou}, and Y.~{Liu}, ``A coordinated approach
  to channel estimation in large-scale multiple-antenna systems,'' \emph{IEEE
  J. on Sel. Areas in Commun.}, vol.~31, no.~2, pp. 264--273, Fe. 2013.

\end{thebibliography}

\end{document}